
 \documentclass[10pt,journal,final]{IEEEtran}
\usepackage[comma,numbers,square,sort&compress]{natbib}
\usepackage{amsmath}
\usepackage{amssymb}
\usepackage{float}
\usepackage{amsthm}
\usepackage{graphicx}
\usepackage{epstopdf}
\usepackage{color}
\usepackage{verbatim}
\usepackage{tikz,times}
\usepackage{algorithm}
\usepackage{algorithmic}
\usepackage{diagbox}
 \DeclareMathOperator{\blockdiag}{blockdiag}

  \DeclareMathOperator{\Expect}{\mathbb{E}}
   \DeclareMathOperator{\Pro}{\mathbb{P}}
 \newcommand{\norm}[1]{\left\lVert#1\right\rVert}

\theoremstyle{plain}
\newtheorem{theorem}{Theorem}[section]

\newtheorem{lemma}{\textbf{Lemma}}[section]

\theoremstyle{definition}
\newtheorem{definition}{\textbf{Definition}}[section]
\newtheorem{assumption}{\textbf{Assumption}}[section]

\theoremstyle{remark}
\newtheorem{remark}{Remark}[section]
\begin{document}

\title{ Distributed Parameter   Estimation Under Event-triggered Communications}

\author{Xingkang~He, Qian Liu, Junfeng Wu, 	Karl Henrik Johansson
\thanks{The work is supported by Knut \& Alice Wallenberg foundation, and by Swedish Research Council.}
\thanks{Xingkang He and Karl Henrik Johansson are with the ACCESS Linnaeus Centre, School of Electrical Engineering and Computer Science. KTH Royal Institute of Technology, Sweden (xingkang@kth.se, kallej@kth.se).}
\thanks{ Qian Liu is with LSC, Academy of Mathematics and Systems Science, Chinese Academy of Sciences, China; and she is also with University of Chinese Academy of Sciences, China (liuqian@amss.ac.cn).}
\thanks{Junfeng Wu is with  College of Control Science and Engineering. Zhejiang University, China (jfwu@zju.edu.cn).}
}

\maketitle

\begin{abstract}
In this paper, we study a distributed parameter estimation problem with an asynchronous communication protocol over multi-agent systems.
Different from   traditional time-driven communication schemes,  in this work, data can be transmitted between agents intermittently 
  rather than in a steady stream. 
First, we propose a recursive distributed estimator based on an event-triggered communication scheme, through which each agent can decide whether the current estimate is sent out to its neighbors or not. With this scheme, considerable communications between agents can be effectively reduced.
Then, under mild conditions including a collective observability, we provide a design principle of triggering thresholds to guarantee the asymptotic  unbiasedness and strong consistency.  Furthermore, under certain conditions, we prove that, with probability one, for every agent the time interval between two successive triggered instants goes to infinity as time goes to infinity. Finally, we provide a numerical simulation to validate the theoretical results of this paper.
\end{abstract}

\begin{IEEEkeywords}
Distributed parameter estimation, event-triggered, strong consistency
\end{IEEEkeywords}

\section{Introduction}
As one of the hottest research topics over the last decade, multi-agent systems have attracted a  lot of attention of researchers around the world due to their broad applications in sensor networks, cyber-physical systems, computer games, transportation, etc. With the development of network technology and the increasing of data amount, distributed learning and estimation protocols without requiring a data center are becoming more and more popular.

Distributed parameter estimation over multi-agent systems is on the problem of  estimation or learning of an unknown parameter based on  data transmission between neighboring agents. Numerous practical applications, such as temperature monitoring, weather prediction and environmental exploration, can be cast into distributed parameter estimation problems. Due to   environmental complexity, the estimation problem is usually modeled under  stochastic frameworks, where  measurements of each agent are polluted by random noises. In \cite{rad2010distributed,kar2011convergence,kar2013distributed,cattivelli2010diffusion}, the distributed parameter estimation problems are investigated with respect to the estimation properties including consistency and asymptotic normality.
The distributed parameter estimation problems over random networks and imperfect communication channels are studied in  \cite{zhang2012distributed,kar2012distributed}. The connection between  graph topologies and  estimation performance in terms of asymptotic variances are  effectively analyzed in above literature.

Design and analysis of communicaton schemes between agents is an essential research topic of networked estimation and control. Due to the limitations of channel capacity and energy resources, traditional time-driven communication schemes may not be suitable to some practical applications, such as wireless agent networks.
Thus, in the existing literature, there have been a  few results which consider   event-triggered communication schemes. 
Event-triggered measurement scheduling problems  are well studied in \cite{you2013asymptotically,shi2014event,mo2012kalman,sinopoli2004kalman,han2015optimal,han2015stochastic,weimer2012distributed}. In these literature, the parameter estimation or filtering problems are investigated under  centralized frameworks, where a center can process the transmitted messages to obtain estimates for parameter vector or state vector. \cite{He2017On,battistelli2018distributed,he2017distributed} study   distributed  filtering problems with  event-triggered communications, where the messages of state estimates or covariance bounds are transmitted to other agents intermittently. 
However, to the best knowledge of authors, the distributed parameter estimation problems  with event-triggered communications have not been well studied in the existing literature. The main difficulty is to design and analyze triggering conditions so as to reduce  communication frequency between agents with guaranteed  estimation properties.

In this paper, we study the distributed parameter estimation problem with event-triggered intermittent communications between neighboring agents.
The contributions of this work are two fold. First, we propose  an event-triggered communication scheme, through which each agent can decide whether the current estimate is sent out to its neighbors or not.   With this scheme, redundant communications between agents can be effectively reduced. Second, under mild conditions, for the considered distributed estimator, we prove the main estimation properties including asymptotic unbiasedness and strong consistency. Besides, 
we prove that,   for every agent the time interval between two successive triggered instants goes to infinity as time goes to infinity
  in the sense of almost sure, which means the communication frequency between any two neighboring agents is tremendously reduced if time is sufficiently large.  
  It should be noted that the main difference between the event-triggered framework proposed in this work and the existing literature is that our triggering threshold will go to zero as time goes to infinity, which is necessary to guarantee the asymptotic convergence of estimates in mild collective observability conditions.

The remainder of the paper is organized as follows: Section 2 is on preliminaries  and problem formulation. Section 3 considers the event-triggered communication scheme and  some  main asymptotic estimation properties.  Section 4 provides a numerical simulation. The conclusion of this paper is given in Section 5.

\subsection{Notations}
 The superscript ``T" represents the transpose.  $I_{n}$ stands for the $n$-dimensional square identity matrix. 
 $\textbf{1}_N$ stands for the $N$-dimensional vector with all elements being one. 
 $\Expect\{x\}$ denotes the mathematical expectation of the stochastic variable $x$, and  $\blockdiag\{\cdot\}$   represent the diagonalizations of block elements.  Additionally, $i.i.d.$ is the abbreviation of `independent identically distribution'.
$A\otimes B$ is the Kronecker product of $A$ and $B$.  $\norm{x}$ is the norm of a vector $x$.  The mentioned scalars, vectors and matrices of this paper are all real-valued. $\mathbb{A}^{n\times m}$ is the real matrix with $n$ rows and $m$ columns. w.r.t. is the abbreviation of `with respect to'.

\section{Preliminaries and Problem Formulation}
In this section, we provide some necessary graph preliminaries and then formulate the problem studied in this work.
\subsection{Graph Preliminaries}
In this paper, the communication between agents of a network is modeled as an undirected graph $\mathcal{G=(V,E,A)}$, which consists of  the set of  nodes $\mathcal{V}$, the set of edges $\mathcal{E}\subseteq \mathcal{V}\times \mathcal{V}$, and the adjacency matrix $\mathcal{A}=[a_{i,j}]$.  $\mathcal{A}$ is a symmetry matrix consisting of one and zero.
If $a_{i,j}=1,j\neq i$, there is an edge $(i,j)\in \mathcal{E}$, which means node $i$ can exchange information with node $j$, and node $j$ is called a  neighbor of node $i$. For node $i$, the neighbor set of agent $i$ is denoted by $\{j\in\mathcal{V}|(i,j)\in \mathcal{E}\}\triangleq\mathcal{N}_{i}$.
We suppose that the graph has no self-loop, which means $a_{i,i}=0$ for any $i\in\mathcal{V}$.
$\mathcal{G}$ is called connected if for any pair nodes $(i_{1},i_{l})$, there exists a   path from $i_{1}$ to $i_{l}$ consisting of edges $(i_{1},i_{2}),(i_{2},i_{3}),\ldots,(i_{l-1},i_{l})$. Besides, we denote $\mathcal{L}=\mathcal{D}-\mathcal{A}$, where $\mathcal{L}$ is called Laplacian matrix and $\mathcal{D}$ called degree matrix. $\mathcal{D}$ is a diagonal matrix consisting of numbers of neighboring nodes. For detailed definitions, the readers are referred to \cite{Mesbahi2010Graph}.
On the connectivity of a graph, the following theorem holds.
\begin{theorem}\cite{Mesbahi2010Graph}
	The graph $\mathcal{G}$ is connected if and only if $\lambda_{2}(\mathcal{L})>0$.
\end{theorem}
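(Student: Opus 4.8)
Although the above theorem is classical (Fiedler's characterization of algebraic connectivity), it is worth sketching the argument, since everything hinges on the quadratic form of $\mathcal{L}$. The plan is to establish, in order: (i) $\mathcal{L}$ is positive semidefinite with smallest eigenvalue $\lambda_{1}(\mathcal{L})=0$; (ii) $\mathcal{G}$ connected $\Rightarrow$ $\dim\ker\mathcal{L}=1$; and (iii) $\mathcal{G}$ disconnected $\Rightarrow$ $\dim\ker\mathcal{L}\ge 2$. Since the eigenvalues of the symmetric matrix $\mathcal{L}$ are real and can be ordered as $0=\lambda_{1}(\mathcal{L})\le\lambda_{2}(\mathcal{L})\le\cdots$, items (ii) and (iii) together assert precisely that $\lambda_{2}(\mathcal{L})>0$ if and only if $\mathcal{G}$ is connected.

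For step (i), I would expand $\mathcal{L}=\mathcal{D}-\mathcal{A}$ and use the symmetry $a_{i,j}=a_{j,i}$ together with $a_{i,i}=0$ to obtain, for every vector $x=(x_{1},\ldots,x_{N})^{T}$,
\[
x^{T}\mathcal{L}x=\frac{1}{2}\sum_{i\in\mathcal{V}}\sum_{j\in\mathcal{N}_{i}}a_{i,j}\,(x_{i}-x_{j})^{2}\ \ge\ 0 .
\]
Hence $\mathcal{L}$ is positive semidefinite, so all its eigenvalues are nonnegative; and since each row of $\mathcal{L}$ sums to zero, $\mathcal{L}\mathbf{1}_{N}=0$, so $0$ is an eigenvalue and thus $\lambda_{1}(\mathcal{L})=0$.

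For step (ii), suppose $\mathcal{G}$ is connected and $x\in\ker\mathcal{L}$. Then $x^{T}\mathcal{L}x=0$, and by the displayed identity every summand must vanish, i.e. $x_{i}=x_{j}$ whenever $(i,j)\in\mathcal{E}$. Given any two nodes $i_{1},i_{l}$, connectedness yields a path $(i_{1},i_{2}),\ldots,(i_{l-1},i_{l})$ along which $x$ is constant, so $x_{i_{1}}=x_{i_{l}}$; therefore $x$ is a constant vector, i.e. $x\in\mathrm{span}\{\mathbf{1}_{N}\}$, giving $\dim\ker\mathcal{L}=1$. For step (iii), if $\mathcal{G}$ is disconnected let $\mathcal{V}_{1},\ldots,\mathcal{V}_{k}$ with $k\ge 2$ be its connected components; since $a_{i,j}=0$ whenever $i,j$ lie in different components, the indicator vector of each $\mathcal{V}_{m}$ lies in $\ker\mathcal{L}$, and these $k$ vectors are linearly independent, so $\dim\ker\mathcal{L}\ge k\ge 2$, whence $\lambda_{2}(\mathcal{L})=0$. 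The only mildly nontrivial point is the path argument in step (ii); everything else is bookkeeping with the edge-sum representation of the Laplacian quadratic form, so no real obstacle is expected.
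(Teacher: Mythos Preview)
Your argument is correct and is the standard Fiedler-style proof via the edge-sum representation of the Laplacian quadratic form; there is no gap. Note, however, that the paper does not supply its own proof of this theorem at all: it is stated with a citation to \cite{Mesbahi2010Graph} and used as a black box, so there is no in-paper argument to compare against. Your sketch is exactly the textbook proof one would find in that reference.
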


\subsection{Problem Setup}
Consider the unknown parameter vector $\theta\in \mathbb{R}^n$ is observed by $N>0$ agents with the following model
\begin{equation}\label{system_all}
y_{i}(t)= H_{i}\theta+v_{i}(t),i=1,2,\cdots,N,
\end{equation}
where 
 $y_{i}(t)\in \mathbb{R}^{m_{i}}$ is the measurement vector, $v_{i}(t)\in \mathbb{R}^{m_{i}}$  is the zero-mean $i.i.d.$ measurement noise with covariance $R_{i}$, and $H_{i}\in \mathbb{R}^{m_{i}\times n}$  represents the known   measurement matrix of agent $i$. The noise covariance matrix of all agents is $R_v$, where $R_v=\Expect\{V(t)V(t)^T\}$ and $V(t)=\left[v_1^T(t),\dots,v_N^T(t)\right]^T$. Note that we simply require the temporal independence of measurement noises, thus the noises of agents could be spatially correlated.

Assume $x_{i}(t)$ is the  estimate of agent $i$ at time $t$ for the parameter vector $\theta$.
In \cite{kar2011convergence}, the  following estimator is studied
\begin{align}\label{eq_estimator0}
x_{i}(t+1)=&x_{i}(t)+\beta(t)\sum_{j\in\mathcal{N}_{i}}(x_{j}(t)-x_{i}(t))\\
&+\alpha(t) KH_{i}^T( y_{i}(t)-H_{i}x_{i}(t)),\nonumber
\end{align}
where $\alpha(t)$ and $\beta(t)$ are time-varying steps satisfying certain conditions. $K$ is the parameter to be designed.

To reduce   limitation of energy consumption and alleviate burden of communication channels, we focus on studying event-triggered communication scheme, in which the state estimates of each agent will not be consistently transmitted.  

In the following of this paper, we focus on solving the problems as follows.
1) How to design a fully distributed event-triggered communication scheme for each agent?
2) What conditions are required to guarantee essential estimation properties, including asymptotic unbiasedness and strong consistency for each agent?
3) How does the event-triggered communication scheme  contribute to reducing the communication frequency of agents with guaranteed properties?

\section{Main results}
In this section, we will propose an event-triggered communication scheme and analyze the main estimation performance of a recursive distributed estimator based on the triggering scheme.
\subsection{Event-triggered communication scheme}

In this subsection, we consider design an event-triggered scheme, which can decide for each agent whether the current estimate is sent out to its neighbors or not.
Let  $t_{k}^i$ be the $k$th triggering time of the $i$th agent, and it is the latest triggering time of agent $i$. 
Then, we define the  triggering event $\mathbf{E}_{i}(t)$ 
\begin{align}\label{eq_event}
\mathbf{E}_{i}(t): \norm{x_{i}(t)-x_{i}(t_{k}^i)}>\frac{1}{ (t+1)^{\rho_i}}, i\in\mathcal{V},
\end{align}
where $\rho_i$ is a positive scalar addressed in the following, and $x_{i}(t_{k}^i)$ is the latest state estimate sent out by agent $i$ at time $t_{k}^i$.

Let $\rho_0=\min\{\rho_j,\forall j\in\mathcal{V}\}$, and the following  random indicator variable $\gamma_i(t)$ be
\begin{align}\label{eq_trigger_scheme}
\gamma_i(t)=\begin{cases}
0, \text{ if } \mathbb{E}_{i}(t) \text{ occurs},\\
1, \text{ otherwise}.
\end{cases}
\end{align}
Note that the distribution of $\gamma_i(t)$ influences the communication frequency of the whole multi-agent system. If $\Pro(\gamma_i(t)=0)=0$, $\forall i\in\mathcal{V}$, $t\in\mathbb{N}$, then the communications between agents will not happen almost surely. And if $\Pro(\gamma_i(t)=0)=1$, $\forall i\in\mathcal{V}$, $t\in\mathbb{N}$, the communication scheme is equivalent  to the time-driven one almost surely.

Based on the triggering scheme (\ref{eq_event}) - (\ref{eq_trigger_scheme}),  for agent $i$, we propose the following event-triggered distributed parameter estimator
\begin{align}\label{eq_estimator}
&x_{i}(t+1)=x_{i}(t)+\alpha(t)  H_{i}^T( y_{i}(t)-H_{i}x_{i}(t)),\\
&\quad+\beta(t)\sum_{j\in\mathcal{N}_{i}}\left(\gamma_i(t)x_{j}(t_{k}^j)+(1-\gamma_i(t))x_{j}(t)-x_{i}(t)\right),\nonumber
\end{align}
where $\alpha(t)$ and $\beta(t)$ are time-varying steps addressed in  Assumption \ref{ass_steps}.

\begin{remark}
	To achieve the estimator (\ref{eq_estimator}), each agent should reserve the latest state estimates which its neighboring agents sent out. If new estimates come, the stored ones can be updated. 
\end{remark}
\begin{remark}
Different from existing results \cite{you2013asymptotically,shi2014event,mo2012kalman,sinopoli2004kalman,han2015optimal,han2015stochastic,weimer2012distributed,He2017On,battistelli2018distributed,he2017distributed}, the triggering threshold $\frac{1}{ (t+1)^{\rho_i}}$ goes to zero as $t$ goes to infinity. If the threshold does not go to zero as time goes to infinity and the collective observability condition (see Assumption \ref{ass_obser}) without agent $i$ is not satisfied, the estimates of all agents except agent $i$ will not converge to the true parameter. 
\end{remark}

%

\subsection{Performance Analysis}
For convenience, we provide the following  notations.
\begin{align}\label{eq_denotations}
\begin{cases}
\Theta=\textbf{1}_N\otimes \theta\\
Y(t)=\left[y_1^T(t),\dots,y_N^T(t)\right]^T\\
X(t)=\left[x_1^T(t),\dots,x_N^T(t)\right]^T\\
X(t_k)=\left[x_{1}^T(t_{k}^1),\dots,x_{N}^T(t_{k}^N)\right]^T\\
\bar D_H=\blockdiag\left\{H_1^T,\dots,H_N^T\right\}\\
 D_H=\blockdiag\left\{H_1^TH_1,\dots,H_N^TH_N\right\}.
\end{cases}
\end{align}

%
%

The following assumptions are needed  in this paper.
\begin{assumption}\label{ass_graph}
The  graph is connected, i.e., $\lambda_2(\mathcal{L})>0$.
\end{assumption}
\begin{assumption}\label{ass_obser}
	The observation system (\ref{system_all}) is collectively observable, i.e., 
	$G=\sum_{i=1}^{N}H_i^TH_i$ is full rank. 
\end{assumption}
\begin{assumption}\label{ass_noise}
	There exists a positive scalar $\epsilon_1>0$, such that $\Expect \{\norm{V(t)}^{2+\epsilon_1}\}<\infty.$
\end{assumption}
\begin{assumption}\label{ass_steps}
	The steps in (\ref{eq_estimator}) are set with  $\alpha(t)=\frac{a}{(t+1)^{\tau_1}}$ and $\beta(t)=\frac{b}{(t+1)^{\tau_2}}$, where $a,b>0$, $0<\tau_2\leq\tau_1\leq 1$. Besides, $\tau_1>\max\{\tau_2+\frac{1}{2+\epsilon_1},0.5\}$.
\end{assumption}
\begin{remark}
Assumption \ref{ass_graph} is a common condition of distributed estimation and control for multi-agent systems. Assumption \ref{ass_obser} is a collective observability condition, which is satisfied even if any local observability condition is not satisfied. Assumption \ref{ass_noise} is on the moment condition of noises, which requires a little severe than boundedness of mean square. Assumption \ref{ass_steps} provides feasible design conditions of the steps in (\ref{eq_estimator}). 
\end{remark}

On the triggering scheme in (\ref{eq_trigger_scheme}), if $\gamma_i(t)=0$, the agent $i$ will send its estimate $x_{i}(t)$ to its neighbors, who then update the stored estimate with $x_{i}(t_{k+1}^i)=x_{i}(t)$. Thus,   we rewrite (\ref{eq_estimator}) in the following form
\begin{align}\label{eq_estimator02}
x_{i}(t+1)=&x_{i}(t)+\beta(t)\sum_{j\in\mathcal{N}_{i}}(x_{j}(t)-x_{i}(t))\nonumber\\
&+\alpha(t)H_{i}^T( y_{i}(t)-H_{i}x_{i}(t))\\
&+\beta(t)\sum_{j\in\mathcal{N}_{i}}(x_{j}(t_{k}^j)-x_{j}(t))\nonumber.
\end{align} 
Considering  the notations in (\ref{eq_denotations}), 
we  obtain the compact form of (\ref{eq_estimator02}) in the following
\begin{align}\label{eq_estimator03}
X(t+1)=&X(t)-\beta(t)(\mathcal{L}\otimes I_M)X(t)\nonumber\\
&+\alpha(t)\bar D_H(Y(t)-\bar D_H^TX(t))\\
&+\beta(t)(\mathcal{A}\otimes I_M)(X(t_{k})-X(t))\nonumber.
\end{align}

We have the following lemma on the error between transmitted estimate vector $X(t_{k})$ and current estimate vector $X(t)$.
\begin{lemma}\label{eq_error_trans}
	Consider (\ref{eq_estimator03}), then there exists a scalar $\bar m>0$, such that
	\begin{align}
\norm{X(t_{k})-X(t)}\leq  \frac{\bar m}{(t+1)^{\rho_0}}.
	\end{align}
\end{lemma}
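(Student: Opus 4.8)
The statement bounds the componentwise (stacked) discrepancy between the currently-held estimates $X(t)$ and the last-transmitted estimates $X(t_k)$. My plan is to reduce this to the per-agent triggering rule and then add up.

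First I would observe that for each agent $i$, by the very definition of the triggering event $\mathbf{E}_i(t)$ in~\eqref{eq_event} and the indicator $\gamma_i(t)$ in~\eqref{eq_trigger_scheme}: whenever $\mathbf{E}_i(t)$ does \emph{not} occur (i.e. $\gamma_i(t)=1$), the stored estimate $x_i(t_k^i)$ is \emph{not} updated and we automatically have $\norm{x_i(t)-x_i(t_k^i)}\le \frac{1}{(t+1)^{\rho_i}}$; and whenever $\mathbf{E}_i(t)$ \emph{does} occur ($\gamma_i(t)=0$), agent $i$ transmits, so $x_i(t_{k+1}^i)=x_i(t)$ and the discrepancy at that step (with the updated index) is zero. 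In either case, at every time $t$, the currently-relevant stored value satisfies $\norm{x_i(t)-x_i(t_k^i)}\le \frac{1}{(t+1)^{\rho_i}}$, where $t_k^i$ denotes the latest triggering time of agent $i$ up to and including $t$. A small care point: one must be consistent about whether the bound is checked before or after the possible transmission at time $t$; in both conventions the inequality $\norm{x_i(t)-x_i(t_k^i)}\le \frac{1}{(t+1)^{\rho_i}}$ holds, since the only way it could fail is that $\mathbf{E}_i(t)$ occurred, which forces a transmission and resets the error to zero.

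Next I would assemble the stacked norm. Since $X(t_k)-X(t)$ stacks the blocks $x_i(t_k^i)-x_i(t)$, we have
\begin{align}\label{eq_stack_bound}
\norm{X(t_k)-X(t)}^2=\sum_{i=1}^{N}\norm{x_i(t_k^i)-x_i(t)}^2\le \sum_{i=1}^{N}\frac{1}{(t+1)^{2\rho_i}}\le \frac{N}{(t+1)^{2\rho_0}},
\end{align}
using $\rho_0=\min_{j\in\mathcal V}\rho_j$ so that $(t+1)^{-2\rho_i}\le (t+1)^{-2\rho_0}$ for every $i$. Taking square roots gives $\norm{X(t_k)-X(t)}\le \frac{\sqrt{N}}{(t+1)^{\rho_0}}$, so the claim holds with $\bar m=\sqrt{N}$.

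There is no real obstacle here; the lemma is essentially a restatement of the triggering rule. The only thing that requires a moment's attention is the bookkeeping of indices (what exactly $t_k^i$ refers to at a generic time $t$, and the fact that $k$ may differ across agents), and making sure the threshold comparison is applied at the same time index $t+1$ that appears in the denominator of~\eqref{eq_event}. Once that is pinned down, the bound follows from~\eqref{eq_stack_bound} with an explicit constant $\bar m=\sqrt{N}$ (independent of $t$), which is all that is asserted.
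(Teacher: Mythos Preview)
Your argument is correct: the lemma is an immediate consequence of the triggering rule~\eqref{eq_event}, since at every time $t$ each agent either just transmitted (error zero) or failed the threshold test (error at most $(t+1)^{-\rho_i}\le (t+1)^{-\rho_0}$), and stacking gives $\bar m=\sqrt{N}$. The paper does not supply a proof of this lemma, so there is no alternative route to compare against; your derivation is exactly the intended one, with the added benefit of an explicit constant.
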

The following two lemmas are useful to further analysis.
\begin{lemma}\label{lem_posit}
	Under Assumption \ref{ass_graph} and Assumption \ref{ass_obser}, $\mathcal{L}\otimes I_M+D_H$ is a positive definite symmetry matrix. Furthermore, there exists a  constant matrix $M_0\in\mathbb{R}^{NM \times NM}$ and a sufficiently large integer $T$, such that for any $t>T$, 
	\begin{align*}
\alpha(t) M_0\leq \beta(t)(\mathcal{L}\otimes I_M)+\alpha(t)D_H
< & I_{N\times M}.
	\end{align*}
\end{lemma}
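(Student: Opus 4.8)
The plan is to treat the two claims separately: the positive–definiteness statement is purely linear‑algebraic, whereas the two‑sided bound is an asymptotic estimate that uses only the decay orders in Assumption~\ref{ass_steps}. For the first claim, symmetry of $\mathcal{L}\otimes I_M+D_H$ is immediate ($\mathcal{L}$ is symmetric since the graph is undirected, and every block $H_i^TH_i$ of $D_H$ is symmetric). For definiteness I would test the quadratic form against an arbitrary $x=[x_1^T,\dots,x_N^T]^T\in\mathbb{R}^{NM}\setminus\{0\}$ with $x_i\in\mathbb{R}^M$. Writing $\mathcal{L}=\mathcal{D}-\mathcal{A}$ gives $x^T(\mathcal{L}\otimes I_M)x=\tfrac12\sum_{i=1}^N\sum_{j\in\mathcal{N}_i}\norm{x_i-x_j}^2\ge 0$, which vanishes iff $x_i=x_j$ whenever $i$ and $j$ are neighbours, i.e.\ (by connectivity, Assumption~\ref{ass_graph}) iff $x=\mathbf{1}_N\otimes c$ for some $c\in\mathbb{R}^M$; and $x^TD_Hx=\sum_{i=1}^N\norm{H_ix_i}^2\ge 0$. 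A case split then finishes it: if $x\ne\mathbf{1}_N\otimes c$ the first term is strictly positive and the second nonnegative; if $x=\mathbf{1}_N\otimes c$ with $c\ne 0$ the first term is $0$ but $x^TD_Hx=c^TGc>0$ by Assumption~\ref{ass_obser}. Hence $\mathcal{L}\otimes I_M+D_H\succ 0$.

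For the sandwich inequality, the \emph{upper} bound is a trivial consequence of $\alpha(t),\beta(t)\to 0$: since $\mathcal{L}\otimes I_M$ and $D_H$ are positive semidefinite, $\beta(t)(\mathcal{L}\otimes I_M)+\alpha(t)D_H$ is positive semidefinite with spectral norm at most $\beta(t)\lambda_{\max}(\mathcal{L})+\alpha(t)\lambda_{\max}(D_H)$, which is below $1$ for all $t$ exceeding some integer $T$; this yields the strict inequality $\prec I_{NM}$ for $t>T$. For the \emph{lower} bound, note that $0<\tau_2\le\tau_1$ implies $\beta(t)=b/(t+1)^{\tau_2}\ge b/(t+1)^{\tau_1}=\tfrac{b}{a}\,\alpha(t)$ for every $t\ge 0$; multiplying the positive semidefinite matrix $\mathcal{L}\otimes I_M$ by this scalar inequality and adding $\alpha(t)D_H$ to both sides gives
\[
\beta(t)(\mathcal{L}\otimes I_M)+\alpha(t)D_H\ \succeq\ \alpha(t)\Big(\tfrac{b}{a}(\mathcal{L}\otimes I_M)+D_H\Big).
\]
So I would take $M_0=\tfrac{b}{a}(\mathcal{L}\otimes I_M)+D_H$, and the very same case analysis used above shows that this $M_0$ is symmetric positive definite — which is precisely the property that will make the bound useful in the subsequent convergence analysis. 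Chaining the two parts yields $\alpha(t)M_0\preceq\beta(t)(\mathcal{L}\otimes I_M)+\alpha(t)D_H\prec I_{NM}$ for all $t>T$.

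None of the individual steps is deep; the only place that genuinely requires the hypotheses is the observation that the kernels of $\mathcal{L}\otimes I_M$ and of $D_H$ meet only at the origin — the former being the ``consensus'' subspace $\{\mathbf{1}_N\otimes c\}$ by connectivity, and $D_H$ being strictly positive there by collective observability. This is the crux of the lemma (and it is reused verbatim for $M_0$); everything else is bookkeeping with the step sizes $\alpha(t),\beta(t)$.
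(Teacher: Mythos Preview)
Your argument is correct and is exactly the standard one: the kernel of $\mathcal{L}\otimes I_M$ is the consensus subspace by connectivity, on which $D_H$ acts as $G$ and is therefore positive definite by collective observability; the sandwich then follows from $\beta(t)\ge(b/a)\alpha(t)$ and $\alpha(t),\beta(t)\to 0$. The paper itself does not give a proof but simply refers to Lemma~6 of \cite{kar2011convergence}, so your write-up supplies the details the paper omits and matches the intended approach.
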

\begin{proof}
	The proof is similar to Lemma 6 of \cite{kar2011convergence}.
\end{proof}

\begin{lemma}(Lemma 6, \cite{kar2013distributed})\label{lem_scalar}
Consider a scalar sequence $\{z(t)\}$  satisfying
	\begin{align*}
	z(t+1)= (1-r_1(t))z(t)+r_2(t),
	\end{align*}
	 with initial value $z(0)\geq 0$,
	where $r_1(t)=\frac{a_1}{(t+1)^{\delta_1}}$ and $r_2(t)= \frac{a_2}{(t+1)^{\delta_2}}$, with $0\leq r_1(t)\leq 1$, $a_1>0$,  $a_2>0$, $0\leq\delta_1\leq 1$, $\delta_2\geq 0$, and  $\delta_1<\delta_2$. Then
	\begin{itemize}
		\item if $\delta_1<1$, for all $\delta_0\in[0,\delta_2-\delta_1)$, we have
		  \begin{align}\label{eq_lim}
	\lim\limits_{t\rightarrow\infty}(t+1)^{\delta_0}z(t)=0.
		  \end{align}
		  \item if $\delta_1=1$ and $a_1>\delta_0$, (\ref{eq_lim}) holds.
	\end{itemize}
\end{lemma}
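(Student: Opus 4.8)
The plan is to establish the stronger bound $z(t)\le C(t+1)^{-\mu}$ for all large $t$, from which $(t+1)^{\delta_0}z(t)\to 0$ is immediate, and to do this by a single induction on $t$ using a polynomial ansatz, so that there is no need to separate the homogeneous and forced parts of the recursion. First I would note $z(t)\ge 0$ for every $t$, which follows by induction from $z(0)\ge 0$, $0\le r_1(t)\le 1$ and $r_2(t)\ge 0$. Next, given $\delta_0$ in the admissible range, I would fix an exponent $\mu$ with $\delta_0<\mu<\delta_2-\delta_1$ in the case $\delta_1<1$, and $\delta_0<\mu<\min\{a_1,\delta_2-1\}$ in the case $\delta_1=1$; the hypotheses ($\delta_1<\delta_2$, and $a_1>\delta_0$ when $\delta_1=1$) are exactly what make this interval nonempty. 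If $z(t)\le C(t+1)^{-\mu}$ holds for all $t\ge T$, then $(t+1)^{\delta_0}z(t)\le C(t+1)^{\delta_0-\mu}\to 0$, which is the claim.

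For the inductive step, assume $z(t)\le C(t+1)^{-\mu}$. Since $1-r_1(t)\ge 0$ and, by Bernoulli's inequality, $(t+2)^{-\mu}=(t+1)^{-\mu}(1+\tfrac1{t+1})^{-\mu}\ge (t+1)^{-\mu}(1-\tfrac{\mu}{t+1})$, it suffices to verify
\begin{align*}
a_2(t+1)^{-\delta_2}\;\le\; C(t+1)^{-\mu}\Big(r_1(t)-\tfrac{\mu}{t+1}\Big).
\end{align*}
When $\delta_1<1$, the term $r_1(t)=a_1(t+1)^{-\delta_1}$ eventually dominates $\mu(t+1)^{-1}$, so for $t$ larger than some $T$ the right-hand side is at least $\tfrac{C a_1}{2}(t+1)^{-\mu-\delta_1}$; since $\mu+\delta_1<\delta_2$, the inequality then reduces to $C\ge 2a_2/a_1$. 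When $\delta_1=1$, one has $r_1(t)-\mu(t+1)^{-1}=(a_1-\mu)(t+1)^{-1}>0$ for every $t$, and since $\mu<\delta_2-1$ the inequality reduces to $C\ge a_2/(a_1-\mu)$. In both cases I would then fix such a $T$ and enlarge $C$ so that also $z(T)\le C(T+1)^{-\mu}$ (possible because $z(T)$ is a finite number produced by finitely many iterations), which closes the induction. Since $\delta_0$ was arbitrary in its range, the lemma follows.

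The main obstacle is the bookkeeping around the single parameter $\mu$: it must simultaneously satisfy $\mu>\delta_0$ (to get decay after multiplying by $(t+1)^{\delta_0}$), $\mu+\delta_1<\delta_2$ (so the forcing term $r_2(t)$ can be absorbed into the ansatz), and, when $\delta_1=1$, $\mu<a_1$ (so the recursion's effective contraction rate stays positive). Checking that all these are jointly feasible is precisely where the hypotheses $\delta_1<\delta_2$, $0\le\delta_1\le1$ and $a_1>\delta_0$ are consumed; after that, the argument is the routine comparison sketched above. The only other delicate point is the discretization correction $-\mu/(t+1)$ produced when passing from $(t+1)^{-\mu}$ to $(t+2)^{-\mu}$: in the regime $\delta_1<1$ it is dominated by $r_1(t)$ for free, but in the borderline regime $\delta_1=1$ it has the same order $1/(t+1)$ as $r_1(t)$, which is exactly why the extra condition $a_1>\delta_0$ (equivalently $a_1>\mu$) is needed there.
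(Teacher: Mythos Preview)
The paper does not give its own proof of this lemma: it is quoted verbatim as Lemma~6 of \cite{kar2013distributed} and used as a black box. So there is nothing in the paper to compare your argument against.

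That said, your proposed proof is correct and self-contained. The ansatz $z(t)\le C(t+1)^{-\mu}$ together with the one-step comparison
\[
a_2(t+1)^{-\delta_2}\le C(t+1)^{-\mu}\Big(r_1(t)-\tfrac{\mu}{t+1}\Big)
\]
is exactly the right reduction, and your case split on $\delta_1<1$ versus $\delta_1=1$ pinpoints precisely why the extra hypothesis $a_1>\delta_0$ is needed in the borderline case. One small clarification worth making explicit: in the case $\delta_1=1$ you require $\mu<\delta_2-1$, so the interval $(\delta_0,\min\{a_1,\delta_2-1\})$ being nonempty needs not only $a_1>\delta_0$ but also $\delta_0<\delta_2-1$. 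This last inequality is the constraint $\delta_0\in[0,\delta_2-\delta_1)$ from the first bullet, which should be read as applying to both cases; without it the conclusion is actually false (the forced part of the solution is only $O(t^{-(\delta_2-1)})$ regardless of how large $a_1$ is). With that caveat noted, your argument closes cleanly.
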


On the estimator (\ref{eq_estimator}), the asymptotic unbiasedness is studied in the following theorem.
\begin{theorem}(Asymptotically Unbiased)\label{thm_bias}
If $\rho_0>\tau_1-\tau_2$, the estimate sequence $\{x_{i}(t)\}$ by (\ref{eq_estimator}) is asymptotically unbiased for the true parameter $\theta$, i.e., $\lim\limits_{t\rightarrow\infty}\Expect\{x_{i}(t)\}=\theta$, $\forall i\in\mathcal{V}.$
\end{theorem}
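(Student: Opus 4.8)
The plan is to work with the compact error dynamics and show the mean error decays to zero. Define the stacked error $\widetilde X(t) = X(t) - \Theta$. Subtracting $\Theta$ from both sides of \eqref{eq_estimator03} and using $(\mathcal L\otimes I_M)\Theta = 0$ (since $\mathcal L\mathbf 1_N = 0$) and $Y(t) = \bar D_H^T\Theta + V(t)$, I would obtain a recursion of the form
\begin{align*}
\widetilde X(t+1) = \bigl(I - \beta(t)(\mathcal L\otimes I_M) - \alpha(t)D_H\bigr)\widetilde X(t) + \alpha(t)\bar D_H V(t) + \beta(t)(\mathcal A\otimes I_M)\bigl(X(t_k)-X(t)\bigr).
\end{align*}
Taking expectations kills the noise term because $\Expect\{V(t)\}=0$ and $V(t)$ is independent of $\widetilde X(t)$ (the estimator at time $t$ depends only on past noises, by the recursive structure). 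This leaves a deterministic recursion for $e(t) := \Expect\{\widetilde X(t)\}$ driven only by the triggering-error term.

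Next I would bound the driving term. By Lemma \ref{eq_error_trans}, $\norm{X(t_k)-X(t)}\le \bar m/(t+1)^{\rho_0}$ deterministically, so $\norm{\beta(t)(\mathcal A\otimes I_M)(X(t_k)-X(t))} \le c\,\beta(t)/(t+1)^{\rho_0} = c'/(t+1)^{\tau_2+\rho_0}$ for constants depending on $\norm{\mathcal A}$ and $b$. For the contraction factor, Lemma \ref{lem_posit} gives, for $t>T$, that $\beta(t)(\mathcal L\otimes I_M)+\alpha(t)D_H \ge \alpha(t)M_0$ and is strictly below $I$; combined with positive definiteness of $\mathcal L\otimes I_M + D_H$ one extracts a uniform lower eigenvalue bound so that $\norm{I - \beta(t)(\mathcal L\otimes I_M)-\alpha(t)D_H} \le 1 - c_1\alpha(t) = 1 - c_1 a/(t+1)^{\tau_1}$ for all large $t$. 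Then $\norm{e(t+1)} \le (1 - a_1/(t+1)^{\tau_1})\norm{e(t)} + a_2/(t+1)^{\tau_2+\rho_0}$, which is exactly the scalar recursion of Lemma \ref{lem_scalar} with $\delta_1=\tau_1$ and $\delta_2 = \tau_2+\rho_0$.

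To invoke Lemma \ref{lem_scalar} I need $\delta_1 < \delta_2$, i.e. $\tau_1 < \tau_2 + \rho_0$, which is precisely the hypothesis $\rho_0 > \tau_1 - \tau_2$. When $\tau_1 < 1$ the lemma (with $\delta_0 = 0$) gives $\norm{e(t)}\to 0$ directly; when $\tau_1 = 1$ one needs $a_1 > \delta_0 = 0$, which holds since $a_1>0$, so again $\norm{e(t)}\to 0$. Finally, since $\Expect\{x_i(t)\} - \theta$ is a subvector of $e(t)$, its norm is dominated by $\norm{e(t)}$, giving $\lim_{t\to\infty}\Expect\{x_i(t)\} = \theta$ for every $i\in\mathcal V$.

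The main obstacle I anticipate is justifying the contraction bound rigorously and uniformly: Lemma \ref{lem_posit} only asserts $\alpha(t)M_0 \le \beta(t)(\mathcal L\otimes I_M)+\alpha(t)D_H < I$, and $M_0$ need not be positive definite on its own (the averaging term and the observation term each have a nontrivial kernel), so the lower bound $1 - c_1\alpha(t)$ on the norm of the iteration matrix requires combining the $\beta$ and $\alpha$ terms carefully — essentially using that $\mathcal L\otimes I_M + D_H \succ 0$ while keeping track of the fact that $\beta(t)$ and $\alpha(t)$ decay at possibly different rates ($\tau_2 \le \tau_1$). Handling the transient $t\le T$ is a routine triangle-inequality argument since finitely many bounded steps do not affect the limit. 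A secondary point to state carefully is the measurability/independence claim that makes $\Expect\{\bar D_H V(t)\} = 0$ drop out of the recursion, which follows from $v_i(\cdot)$ being i.i.d. in time and $X(t)$ being a function of $\{V(s)\}_{s<t}$ only.
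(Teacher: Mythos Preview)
Your proposal is correct and follows essentially the same route as the paper: derive the compact error recursion, take expectations to drop the noise, use Lemma~\ref{lem_posit} for a contraction factor of the form $1-\alpha(t)m_0$, use Lemma~\ref{eq_error_trans} to bound the triggering residual by $c/(t+1)^{\tau_2+\rho_0}$, and then invoke Lemma~\ref{lem_scalar} with $\delta_1=\tau_1$, $\delta_2=\tau_2+\rho_0$ and $\delta_0=0$. Your worry about $M_0$ is really a worry about the proof of Lemma~\ref{lem_posit}, not about this theorem; the paper simply takes $m_0=\lambda_{\min}(M_0)>0$ as granted by that lemma, and your handling of the $\tau_1=1$ case (only $a_1>0$ is needed when $\delta_0=0$) is in fact cleaner than the paper's ``without loss of generality'' remark.
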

\begin{proof}
According to (\ref{eq_estimator03}), we have
\begin{align}\label{eq_estimator2}
X(t+1)=&X(t)-\beta(t)(\mathcal{L}\otimes I_M)X(t)\nonumber\\
&+\alpha(t) D_H(\Theta-X(t))\nonumber\\
&+\alpha(t)\bar D_HV(t)\\
&+\beta(t)(\mathcal{A}\otimes I_M)(X(t_{k})-X(t))\nonumber.
\end{align}

Let $\tilde X(t)=X(t)-\Theta$ and $\bar X(t)=X(t_{k})-X(t)$. By $(\mathcal{L}\otimes I_M)\Theta=0$, we have
\begin{align}\label{eq_estimator3}
&\tilde X(t+1)\nonumber\\
=&\tilde X(t)-\beta(t)(\mathcal{L}\otimes I_M)\tilde X(t)-\alpha(t) D_H\tilde X(t)\nonumber\\
&+\alpha(t)\bar D_HV(t)+\beta(t)(\mathcal{A}\otimes I_M)\bar X(t)\\
=&\left(I_{M\times N}-\beta(t)(\mathcal{L}\otimes I_M)-\alpha(t) D_H\right)\tilde X(t)\nonumber\\
&+\alpha(t)\bar D_HV(t)+\beta(t)(\mathcal{A}\otimes I_M)\bar X(t).\nonumber
\end{align}
Taking expectation on both sides of (\ref{eq_estimator3}), we have
\begin{align}\label{eq_estimator4}
&\Expect\{\tilde X(t+1)\}\nonumber\\
=&\left(I_{M\times N}-\beta(t)(\mathcal{L}\otimes I_M)-\alpha(t) D_H\right)\Expect\{\tilde X(t)\}\nonumber\\
&+\beta(t)(\mathcal{A}\otimes I_M)\Expect\{\bar X(t)\}.
\end{align}
According to Lemma \ref{lem_posit}, there exists a  sufficiently large integer $T$, such that for any $t>T$,
	\begin{align*}
\alpha(t) M_0\leq \beta(t)(\mathcal{L}\otimes I_M)+\alpha(t) D_H
< & I_{N\times M}.
\end{align*}
Then, for $t>T$, taking norm operator on both sides of (\ref{eq_estimator4}) yields 
\begin{align}\label{eq_estimator5}
&\norm{\Expect\{\tilde X(t+1)\}}\nonumber\\
\leq&\norm{\left(I_{M\times N}-\beta(t)(\mathcal{L}\otimes I_M)-\alpha(t)  D_H\right)}\norm{\Expect\{\tilde X(t)\}}\nonumber\\
&+\beta(t)\norm{(\mathcal{A}\otimes I_M)}\norm{\Expect\{\bar X(t)\}}\\
\leq& (1-\alpha(t)m_0)\norm{\Expect\{\tilde X(t)\}}+\beta(t)MN\norm{\Expect\{\bar X(t)\}}\nonumber,
\end{align}
where $m_0=\lambda_{min}(M_0)$.

Recall $\beta(t)=\frac{b}{(t+1)^{\tau_2}}$, then there exists a constant scalar $m_1>0$, such that $\beta(t)\norm{\Expect\{\bar X(t)\}}\leq  \frac{m_1}{(t+1)^{\tau_2+\rho_0}}$.
As a result, from (\ref{eq_estimator5}), we have
\begin{align}\label{eq_estimator6}
&\norm{\Expect\{\tilde X(t+1)\}}\nonumber\\
\leq& (1-\alpha(t)m_0)\norm{\Expect\{\tilde X(t)\}}+\frac{MNm_1}{(t+1)^{\tau_2+\rho_0}}\\
=& (1-\frac{am_0}{(t+1)^{\tau_1}})\norm{\Expect\{\tilde X(t)\}}+\frac{MNm_1}{(t+1)^{\tau_2+\rho_0}}\nonumber.
\end{align}
Without losing generality, here we suppose $am_0>\rho_0+\tau_2-\tau_1$. Otherwise, we can obtain a sufficiently large $m_0$ by increasing $t$ and maintaining the value of $\frac{am_0}{(t+1)^{\tau_1}}.$ Due to $\rho_0>\tau_1-\tau_2$,  according to Lemma \ref{lem_scalar} and (\ref{eq_estimator6}), $\norm{\Expect\{\tilde X(t+1)\}}$ goes to zero as $t$ goes to infinity. 
\end{proof}
We can see from Theorem \ref{thm_bias} that the initial estimation biases of agents can be removed by the estimator (\ref{eq_estimator}) as time goes to infinity.

\begin{lemma}\label{lem_bound}
Under Assumptions \ref{ass_graph} - \ref{ass_steps}, if $\rho_0>0.5-\tau_2$,  there exists a finite random variable $R>0$, such that
\begin{align*}
\Pro\left(\sup_{i\in\mathbb{N}}\norm{X(t)}\leq R\right)=1.
\end{align*}
\end{lemma}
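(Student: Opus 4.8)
The plan is to turn the squared error $W(t):=\norm{\tilde X(t)}^{2}$, with $\tilde X(t)=X(t)-\Theta$, into an almost-supermartingale and to conclude from the Robbins--Siegmund almost-supermartingale convergence theorem that $W(t)$ converges almost surely to a finite random limit, whence $\sup_{t}\norm{X(t)}<\infty$ almost surely. First I would record the error dynamics from (\ref{eq_estimator3}):
\[
\tilde X(t+1)=A(t)\tilde X(t)+\alpha(t)\bar D_{H}V(t)+\beta(t)(\mathcal{A}\otimes I_{M})\bar X(t),\qquad A(t):=I-\beta(t)(\mathcal{L}\otimes I_{M})-\alpha(t)D_{H},
\]
where $\bar X(t)=X(t_{k})-X(t)$. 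By Lemma~\ref{lem_posit}, for every $t>T$ the matrix $A(t)$ is symmetric with $0<A(t)<I$, hence $\norm{A(t)}\le 1-\alpha(t)m_{0}$ with $m_{0}:=\lambda_{min}(M_{0})>0$; by Lemma~\ref{eq_error_trans} there is a constant $c_{0}>0$ such that $\beta(t)\norm{(\mathcal{A}\otimes I_{M})\bar X(t)}\le\eta(t):=c_{0}(t+1)^{-(\tau_{2}+\rho_{0})}$; and by Assumption~\ref{ass_noise} together with temporal independence, $\sigma^{2}:=\Expect\{\norm{\bar D_{H}V(t)}^{2}\}<\infty$.

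Next I would work with the natural filtration $\mathcal{F}_{t}$ generated by the initial estimates and $V(0),\dots,V(t-1)$, so that $\tilde X(t)$, $A(t)$, $\bar X(t)$ are $\mathcal{F}_{t}$-measurable while $V(t)$ is independent of $\mathcal{F}_{t}$ with zero mean. Conditioning on $\mathcal{F}_{t}$ kills the cross term containing $\bar D_{H}V(t)$, and using $\norm{A(t)}\le 1-\alpha(t)m_{0}$ one gets, for $t>T$,
\[
\Expect\{W(t+1)\mid\mathcal{F}_{t}\}=\norm{A(t)\tilde X(t)+\beta(t)(\mathcal{A}\otimes I_{M})\bar X(t)}^{2}+\alpha(t)^{2}\sigma^{2}\le\bigl((1-\alpha(t)m_{0})\sqrt{W(t)}+\eta(t)\bigr)^{2}+\alpha(t)^{2}\sigma^{2}.
\]
Expanding the square and bounding the cross term $2(1-\alpha(t)m_{0})\eta(t)\sqrt{W(t)}$ by Young's inequality --- spending part of the contraction budget $\alpha(t)m_{0}W(t)$ to absorb it --- brings this into the form $\Expect\{W(t+1)\mid\mathcal{F}_{t}\}\le(1+a(t))W(t)-b(t)+c(t)$ with $a(t),b(t),c(t)\ge 0$, where $b(t)$ is a positive multiple of $\alpha(t)W(t)$ and $a(t),c(t)$ are deterministic sequences built from $\eta(t)$, $\eta(t)^{2}$ and $\alpha(t)^{2}$ (the precise make-up depending on how the cross term is split).

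It then remains to verify $\sum_{t}a(t)<\infty$ and $\sum_{t}c(t)<\infty$. The noise part is tamed by $\sum_{t}\alpha(t)^{2}<\infty$, which holds since $\tau_{1}>1/2$ (Assumption~\ref{ass_steps}); the triggering part is precisely where the hypothesis $\rho_{0}>0.5-\tau_{2}$ is consumed, ensuring the relevant series in $\eta(t)$ converge. Granting this, the Robbins--Siegmund lemma yields $W(t)\to W_{\infty}<\infty$ a.s., hence $\sup_{t\in\mathbb{N}}\norm{\tilde X(t)}<\infty$ a.s. (the finitely many indices $t\le T$ being bounded trivially); since $\norm{X(t)}\le\norm{\tilde X(t)}+\norm{\Theta}$, the random variable $R:=\norm{\Theta}+\sup_{t\in\mathbb{N}}\norm{\tilde X(t)}$ is finite and satisfies $\Pro(\sup_{t\in\mathbb{N}}\norm{X(t)}\le R)=1$. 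An equivalent route would be to split $\tilde X(t)=\tilde X_{\mathrm{nt}}(t)+\delta(t)$ into the trajectory of the non-triggered estimator~(\ref{eq_estimator0}), whose almost-sure boundedness is the standard stochastic-approximation fact under Assumptions~\ref{ass_graph}--\ref{ass_steps} (cf.~\cite{kar2011convergence,kar2013distributed}), plus a correction $\delta(t)$ obeying the deterministic recursion $\norm{\delta(t+1)}\le(1-\alpha(t)m_{0})\norm{\delta(t)}+\eta(t)$, to which Lemma~\ref{lem_scalar} applies.

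The step I expect to be the main obstacle is the bookkeeping in this cross-term estimate: the event-triggering error decays only like $(t+1)^{-(\tau_{2}+\rho_{0})}$ while the gain $\alpha(t)$ driving the contraction vanishes, so $2\langle A(t)\tilde X(t),\beta(t)(\mathcal{A}\otimes I_{M})\bar X(t)\rangle$ must be split against the Lyapunov decrease $\alpha(t)m_{0}W(t)$ carefully enough that the residue placed on the summable side still sums under the stated threshold on $\rho_{0}$ --- a crude split instead leaves a term of order $\eta(t)^{2}/\alpha(t)$ and would force a stronger condition. Apart from that, the ingredients, namely the almost-sure boundedness of the pure stochastic-approximation component and the invocation of the Robbins--Siegmund lemma, are routine.
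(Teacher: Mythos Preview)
The paper itself omits the proof of this lemma (``Due to page limitation, the proof is omitted.''), so there is no line-by-line comparison to make; what follows is an assessment of your proposal on its own terms.

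Your overall architecture --- pass to $W(t)=\norm{\tilde X(t)}^{2}$, use Lemma~\ref{lem_posit} to get the contraction factor $1-\alpha(t)m_{0}$, use Lemma~\ref{eq_error_trans} to bound the triggering residual by $\eta(t)=c_{0}(t+1)^{-(\tau_{2}+\rho_{0})}$, condition on $\mathcal{F}_{t}$ to kill the noise cross term, and invoke Robbins--Siegmund --- is sound in spirit. The problem is exactly the point you flag yourself but do not resolve: the cross term $2(1-\alpha(t)m_{0})\eta(t)\sqrt{W(t)}$. Every way you have sketched to absorb it fails to reach the stated threshold $\rho_{0}>0.5-\tau_{2}$. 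If you ``spend part of the contraction budget'' via Young's inequality, i.e.\ bound $2\eta(t)\sqrt{W(t)}\le \tfrac{1}{2}\alpha(t)m_{0}W(t)+2\eta(t)^{2}/(\alpha(t)m_{0})$, then the summable side carries a term of order $\eta(t)^{2}/\alpha(t)\sim (t+1)^{-(2(\tau_{2}+\rho_{0})-\tau_{1})}$, and summability forces $\rho_{0}>\tfrac{1+\tau_{1}}{2}-\tau_{2}$, which is strictly stronger than $\rho_{0}>0.5-\tau_{2}$ because $\tau_{1}>1/2$. If instead you bound $2\eta(t)\sqrt{W(t)}\le \eta(t)(1+W(t))$ and put $\eta(t)$ on the $a(t)$ side, you need $\sum\eta(t)<\infty$, i.e.\ $\rho_{0}>1-\tau_{2}$, even stronger. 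You state that a ``crude split'' fails and allude to a careful one that works, but you never produce it; as written this is the gap.

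Your alternative route has the same defect. Writing $\tilde X(t)=\tilde X_{\mathrm{nt}}(t)+\delta(t)$ yields $\norm{\delta(t+1)}\le(1-\alpha(t)m_{0})\norm{\delta(t)}+\eta(t)$, to which Lemma~\ref{lem_scalar} applies only when $\tau_{1}<\tau_{2}+\rho_{0}$, i.e.\ $\rho_{0}>\tau_{1}-\tau_{2}$; for $\rho_{0}$ merely larger than $0.5-\tau_{2}$ but below $\tau_{1}-\tau_{2}$ (take e.g.\ $\tau_{1}=0.9$, $\tau_{2}=0.1$, $\rho_{0}=0.5$) the scalar recursion for $\norm{\delta(t)}$ actually drifts upward like $(t+1)^{\tau_{1}-\tau_{2}-\rho_{0}}$, so boundedness of $\delta(t)$ does not follow. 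In short, both routes you propose establish the lemma under the stronger hypothesis $\rho_{0}>\tau_{1}-\tau_{2}$ --- which is in fact the condition used in Theorems~\ref{thm_bias}--\ref{thm_consistency} --- but not under the weaker $\rho_{0}>0.5-\tau_{2}$ stated here. Either an additional idea exploiting the structure of $\bar X(t)$ beyond the crude norm bound of Lemma~\ref{eq_error_trans} is needed, or the hypothesis in the lemma is looser than what your argument (and, quite possibly, any argument along these lines) can deliver.
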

\begin{proof}
	Due to page limitation, the proof is omitted.
\end{proof}

To study the convergence of estimates in (\ref{eq_estimator}), first we  introduce a centralized estimator with strong consistency, i.e., the estimate sequence converges to the true parameter almost surely. Then, we prove the estimates of (\ref{eq_estimator}) can reach consensus, and the consensus value can asymptotically converge to the estimates of the centralized estimator. Thus, the strong consistency of estimates in (\ref{eq_estimator}) can be proved.
\begin{definition}(Centralized Linear Estimator)\label{df_centra}
	A centralized linear estimator $\{u(t)\}$ has the following form
	\begin{align}\label{eq_ut}
	u(t+1)=u(t)+\frac{\alpha_c(t)}{N} \sum_{i=1}^{N}H_i^T\left(y_i(t)-H_iu(t)\right),
	\end{align}
	where $\alpha(t)=\frac{a_c}{(t+1)^{\tau_c}}$ for some $a_c>0$ and $\tau_c\geq 0$. 
\end{definition}

\begin{lemma}\label{prop_centr}\cite{kar2011convergence}
	For the centralized linear estimator given in Definition \ref{df_centra}, we have the following results
	
	1) The estimate sequence $\{u(t)\}$ is of strong consistency w.r.t. $\theta$, i.e., 
	\begin{align}
	\mathbb{P}\left(\lim\limits_{t\rightarrow\infty}u(t)=\theta\right)=1.
	\end{align}
	
	2) Let $\alpha_c(t)=\frac{a_c}{(t+1)}$ with $a_c>\frac{N}{2\lambda_{min}(G)}$.
	Then the sequence $\{\sqrt{t+1}(u(t)-\theta)\}$ is asymptotically normal, i.e.,
	\begin{align*}
	\sqrt{t+1}(u(t)-\theta)\Rightarrow \mathcal{N}(0,S_c),
	\end{align*}
	where 
	\begin{align*}
	&S_c =\frac{a_c^2}{N^2}\int_{0}^{\infty}e^{\varSigma_1v}S_1e^{\varSigma_1^Tv}dv\\
	&\varSigma_1=-\frac{a_c}{N}G+\frac{1}{2}I_M\\
	&S_1=(\textbf{1}\otimes I_M)^T\bar D_{ H}R_v\bar D_{ H}^T(\textbf{1}\otimes I_M).
	\end{align*}
	
\end{lemma}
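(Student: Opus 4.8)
The plan is to recognize the centralized recursion (\ref{eq_ut}) as a linear Robbins--Monro stochastic approximation and to invoke the standard consistency and central-limit results for such schemes; this is exactly the route taken in \cite{kar2011convergence}, so I only outline the structure. Writing $e(t)=u(t)-\theta$ and substituting $y_i(t)=H_i\theta+v_i(t)$ into (\ref{eq_ut}), the estimation error obeys the linear recursion
\begin{align}\label{eq_centr_err}
e(t+1)=\left(I_M-\frac{\alpha_c(t)}{N}G\right)e(t)+\frac{\alpha_c(t)}{N}\xi(t),
\end{align}
where $\xi(t):=\sum_{i=1}^{N}H_i^Tv_i(t)=(\textbf{1}_N\otimes I_M)^T\bar D_HV(t)$ is a zero-mean, temporally i.i.d. sequence with covariance $S_1$, and $G=\sum_{i=1}^{N}H_i^TH_i$ is positive definite by Assumption~\ref{ass_obser}.

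For part~1 I would use $\norm{e(t)}^2$ as a stochastic Lyapunov function. Let $\mathcal{F}_t=\sigma\{v_i(s):s<t,\ i\in\mathcal{V}\}$; since $e(t)$ is $\mathcal{F}_t$-measurable and $\Expect\{\xi(t)\mid\mathcal{F}_t\}=0$, the cross term in $\Expect\{\norm{e(t+1)}^2\mid\mathcal{F}_t\}$ vanishes and, using $\lambda_{min}(G)>0$,
\begin{align*}
\Expect\{\norm{e(t+1)}^2\mid\mathcal{F}_t\}\leq\left(1-\frac{2\alpha_c(t)}{N}\lambda_{min}(G)+c_1\alpha_c(t)^2\right)\norm{e(t)}^2+c_2\alpha_c(t)^2,
\end{align*}
with $c_1=\lambda_{max}(G)^2/N^2$ and $c_2=N^{-2}\Tr(S_1)$, the latter finite by Assumption~\ref{ass_noise}. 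Since $\alpha_c(t)=a_c/(t+1)^{\tau_c}$ with the step-size exponents of interest (in particular $\tau_c=1$, used in part~2) satisfies $\sum_t\alpha_c(t)=\infty$ and $\sum_t\alpha_c(t)^2<\infty$, the Robbins--Siegmund almost-supermartingale convergence theorem yields $\norm{e(t)}^2\to0$ almost surely, hence $u(t)\to\theta$ almost surely.

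For part~2 I specialize to $\alpha_c(t)=a_c/(t+1)$, so (\ref{eq_centr_err}) reads $e(t+1)=(I_M-\frac{1}{t+1}B)e(t)+\frac{1}{t+1}\zeta(t)$ with $B=\frac{a_c}{N}G$, $\zeta(t)=\frac{a_c}{N}\xi(t)$ and $\Expect\{\zeta(t)\zeta(t)^T\}=\frac{a_c^2}{N^2}S_1$. The hypothesis $a_c>\frac{N}{2\lambda_{min}(G)}$ is precisely $\lambda_{min}(B)>\frac{1}{2}$, i.e. $\varSigma_1=\frac{1}{2}I_M-B$ is Hurwitz. I would then apply the central limit theorem for Robbins--Monro recursions with gain $a_c/(t+1)$: writing $\sqrt{t+1}\,e(t)=\sqrt{t+1}\,\Phi(t,0)e(0)+\sqrt{t+1}\sum_{s=0}^{t-1}\Phi(t,s+1)\frac{1}{s+1}\zeta(s)$, where $\Phi(t,s)$ is the transition matrix of the homogeneous part, the deterministic term is negligible at scale $\sqrt{t+1}$ because $\norm{\Phi(t,0)}=O(t^{-\lambda_{min}(B)})$ with $\lambda_{min}(B)>\frac{1}{2}$; for the stochastic term one uses the asymptotics $\Phi(t,s)\sim\left(\frac{s+1}{t+1}\right)^{B}$ to identify the limiting covariance as the Riemann-sum limit $S_c=\frac{a_c^2}{N^2}\int_0^{\infty}e^{\varSigma_1v}S_1e^{\varSigma_1^Tv}dv$ (equivalently, the unique solution of the Lyapunov equation $\varSigma_1 S_c+S_c\varSigma_1^T=-\frac{a_c^2}{N^2}S_1$), and checks a Lindeberg condition for the associated martingale triangular array to obtain asymptotic normality.

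The main obstacle is part~2, not part~1: making the CLT rigorous requires a careful martingale decomposition of $\sqrt{t+1}\,e(t)$, sharp estimates on $\Phi(t,s)$ so that the discrete variance sum converges to the stated integral, and a uniform-integrability/Lindeberg argument for which the extra moment in Assumption~\ref{ass_noise}, $\Expect\{\norm{V(t)}^{2+\epsilon_1}\}<\infty$, is exactly what is needed. Since all of this is carried out in \cite{kar2011convergence}, I would cite that reference for the full verification and limit the exposition to the reduction (\ref{eq_centr_err}) together with the identification of $B$, $\varSigma_1$, $S_1$ and $S_c$ above.
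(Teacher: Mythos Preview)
The paper gives no proof of this lemma at all; it is stated with the citation to \cite{kar2011convergence} and nothing more. Your outline---rewriting the error as the linear Robbins--Monro recursion (\ref{eq_centr_err}), applying Robbins--Siegmund for almost-sure convergence, and invoking the standard stochastic-approximation CLT once $\lambda_{\min}(B)>\tfrac12$---is correct and is exactly the content of the cited reference, so your proposal is appropriate (and in fact more detailed than what the paper itself provides).
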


Define   $x_{avg}(t)=\frac{1}{N}(\textbf{1}_N\otimes  I_M)^TX(t)$.
In the following lemma, we provide conditions such that the estimates of agents reach consensus. 
\begin{lemma}\label{lem_consensus}
Let Assumptions \ref{ass_graph} - \ref{ass_steps} hold. Then,
 for any 	\begin{align*}
	0\leq\tau_0<\min\{\rho_0,\tau_1-\tau_2-\frac{1}{2+\epsilon_1}\}, 
	\end{align*}
	we have
	\begin{align*}
	\mathbb{P}\left(\lim\limits_{t\rightarrow\infty}(t+1)^{\tau_0}\norm{x_i(t)-x_{avg}(t)}=0\right)=1, \forall i\in\mathcal{V}.
	\end{align*}
\end{lemma}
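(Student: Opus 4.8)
The plan is to analyze the \emph{disagreement} vector obtained by projecting the joint dynamics onto the subspace orthogonal to the consensus direction and to show it decays at the asserted polynomial rate. Let $J=\tfrac1N\mathbf{1}_N\mathbf{1}_N^T$, $\Pi=I_{NM}-J\otimes I_M$, and put $\delta(t)=\Pi X(t)=X(t)-\mathbf{1}_N\otimes x_{avg}(t)$, so that $\norm{x_i(t)-x_{avg}(t)}\leq\norm{\delta(t)}$ for every $i$ and it suffices to show $\Pro(\lim_{t\rightarrow\infty}(t+1)^{\tau_0}\norm{\delta(t)}=0)=1$. Applying $\Pi$ to (\ref{eq_estimator03}), and using $(\mathcal{L}\otimes I_M)(J\otimes I_M)=0$ together with $\bar D_H(Y(t)-\bar D_H^TX(t))=-D_H\tilde X(t)+\bar D_HV(t)$, one obtains
\begin{align*}
\delta(t+1)=&\big(I_{NM}-\beta(t)(\mathcal{L}\otimes I_M)\big)\delta(t)\\
&+\alpha(t)\Pi\bar D_HV(t)+e(t),
\end{align*}
where $e(t)=-\alpha(t)\Pi D_H\tilde X(t)+\beta(t)\Pi(\mathcal{A}\otimes I_M)(X(t_k)-X(t))$ belongs to the range of $\Pi$. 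Under Assumption~\ref{ass_graph} the matrix $I_{NM}-\beta(t)(\mathcal{L}\otimes I_M)$ leaves the range of $\Pi$ invariant and, for all $t$ beyond some $T_0$, has operator norm at most $1-\beta(t)\lambda_2(\mathcal{L})<1$ on it; and by Lemma~\ref{lem_bound} (which yields $\norm{\tilde X(t)}\leq\norm{X(t)}+\norm{\Theta}\leq R_1$ a.s.) together with Lemma~\ref{eq_error_trans}, there is an a.s.\ finite $c_3$ with $\norm{e(t)}\leq c_3(t+1)^{-\min\{\tau_1,\tau_2+\rho_0\}}$.

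I would then split $\delta(t)=w(t)+\zeta(t)$ for $t\geq T_0$, where $w(T_0)=\delta(T_0)$, $\zeta(T_0)=0$, and
\begin{align*}
w(t+1)=&\big(I_{NM}-\beta(t)(\mathcal{L}\otimes I_M)\big)w(t)+e(t),\\
\zeta(t+1)=&\big(I_{NM}-\beta(t)(\mathcal{L}\otimes I_M)\big)\zeta(t)\\
&+\alpha(t)\Pi\bar D_HV(t);
\end{align*}
both iterates remain in the range of $\Pi$. For the ``deterministic'' part $w$, the bound $\norm{w(t+1)}\leq(1-\beta(t)\lambda_2(\mathcal{L}))\norm{w(t)}+c_3(t+1)^{-\min\{\tau_1,\tau_2+\rho_0\}}$ and Lemma~\ref{lem_scalar}, applied with $\delta_1=\tau_2$ (note $\tau_2<1$ whenever the asserted range of $\tau_0$ is nonempty) and $\delta_2=\min\{\tau_1,\tau_2+\rho_0\}$, give $(t+1)^{\tau_0}\norm{w(t)}\rightarrow0$ a.s.\ for every $\tau_0<\min\{\rho_0,\tau_1-\tau_2\}$.

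The martingale part $\zeta(t)$ is the crux. Introduce the commuting transition matrices $\Phi(t,s)=\prod_{l=s}^{t-1}(I_{NM}-\beta(l)(\mathcal{L}\otimes I_M))$ and write $\phi_{t,s}=\norm{\Phi(t,s+1)\Pi}$, which satisfies $\phi_{t,s}\leq\exp(-\lambda_2(\mathcal{L})\sum_{l=s+1}^{t-1}\beta(l))$; unrolling, $\zeta(t)=\sum_{s=T_0}^{t-1}\alpha(s)\Phi(t,s+1)\Pi\bar D_HV(s)$, a sum of independent zero-mean vectors for each fixed $t$. A Rosenthal-type moment inequality and Assumption~\ref{ass_noise} then give
\begin{align*}
\Expect\norm{\zeta(t)}^{2+\epsilon_1}\leq &C\Big(\sum_{s=T_0}^{t-1}\alpha(s)^2\phi_{t,s}^2\Big)^{\frac{2+\epsilon_1}{2}}\\
&+C\sum_{s=T_0}^{t-1}\alpha(s)^{2+\epsilon_1}\phi_{t,s}^{2+\epsilon_1}\,\Expect\norm{V(s)}^{2+\epsilon_1}.
\end{align*}
Estimating the convolution sum $\sum_s\alpha(s)^2\phi_{t,s}^2$ — whose mass concentrates on an $O((t+1)^{\tau_2})$-long block of indices just below $t$, on which $\phi_{t,s}\approx1$ and $\alpha(s)\approx\alpha(t)$ — gives a bound of order $(t+1)^{\tau_2-2\tau_1}$, with the second sum of lower order, so $\Expect\norm{\zeta(t)}^{2+\epsilon_1}=O((t+1)^{(\tau_2-2\tau_1)(2+\epsilon_1)/2})$. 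Hence $\sum_t\Expect[(t+1)^{\tau_0(2+\epsilon_1)}\norm{\zeta(t)}^{2+\epsilon_1}]<\infty$ whenever $\tau_0<\tau_1-\tau_2-\tfrac1{2+\epsilon_1}$, and Chebyshev's inequality together with Borel--Cantelli (applied with thresholds $1/m$, $m\in\mathbb{N}$) forces $(t+1)^{\tau_0}\norm{\zeta(t)}\rightarrow0$ almost surely. Combining the bounds for $w$ and $\zeta$ yields $(t+1)^{\tau_0}\norm{\delta(t)}\rightarrow0$ a.s.\ for every $\tau_0<\min\{\rho_0,\tau_1-\tau_2-\tfrac1{2+\epsilon_1}\}$, as claimed. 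The main obstacle is exactly this last step — the sharp estimate of the convolution sum controlling $\Expect\norm{\zeta(t)}^{2+\epsilon_1}$ — since it is through that sum that the moment exponent $2+\epsilon_1$ of Assumption~\ref{ass_noise} enters the attainable rate as the $-\tfrac1{2+\epsilon_1}$ loss.
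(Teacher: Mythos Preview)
Your approach is correct and is precisely the canonical one for this type of result: project onto the disagreement subspace via $\Pi=I_{NM}-J\otimes I_M$, separate the noise-driven martingale part from the residual driven by $\alpha(t)\Pi D_H\tilde X(t)$ and the event-trigger mismatch, then handle the latter with Lemma~\ref{lem_scalar} and the former by a higher-moment/Borel--Cantelli argument. The paper omits its proof of this lemma (``Due to page limitation''), but given that the surrounding development is modeled closely on \cite{kar2011convergence,kar2013distributed}, whose consensus lemmas proceed exactly along these lines, your argument almost certainly coincides with what the authors intended.

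Two small remarks. First, your appeal to Lemma~\ref{lem_bound} to bound $\norm{\tilde X(t)}$ carries the hidden hypothesis $\rho_0>0.5-\tau_2$; the statement of Lemma~\ref{lem_consensus} does not list it, but every downstream use in the paper (Theorems~\ref{thm_consistency} and~\ref{thm_triggering}) imposes conditions on $\rho_0$ that imply it, so this is a harmless omission inherited from the paper rather than a defect in your argument. Second, your Rosenthal estimate is in fact sharper than you record: the convolution sum $\sum_s\alpha(s)^2\phi_{t,s}^2$ itself satisfies the scalar recursion $S_{t+1}\leq(1-\beta(t)\lambda_2)S_t+\alpha(t)^2$, to which Lemma~\ref{lem_scalar} applies directly with $\delta_1=\tau_2$, $\delta_2=2\tau_1$, giving $S_t=O((t+1)^{\tau_2-2\tau_1+\epsilon})$ rigorously; the resulting summability threshold is $\tau_0<\tau_1-\tau_2/2-\tfrac{1}{2+\epsilon_1}$, strictly larger than the $\tau_1-\tau_2-\tfrac{1}{2+\epsilon_1}$ you (and the lemma) state. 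The stated bound is thus dictated by the deterministic piece (and by $\rho_0$), not by the martingale.
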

\begin{proof}
		Due to page limitation, the proof is omitted.
\end{proof}

Next, we show that the consensus value, i.e., the average estimates, will converge to the estimates of the centralized estimator in (\ref{eq_ut}).
\begin{lemma}\label{lem_appro_centr}
	Let Assumptions \ref{ass_graph} - \ref{ass_steps} hold. 
	Suppose $\{u(t)\}$ is the centralized estimates given in Definition \ref{df_centra} with $\tau_c=\tau_1$, $a_c=a$.
	If $\rho_0>\tau_1-\tau_2,$
	then for any
		\begin{align*}
	0\leq\tau_0<\min\{\tau_1-\tau_2-\frac{1}{2+\epsilon_1},\rho_0+\tau_2-\tau_1\}, 
	\end{align*}
	we have	
\begin{align*}
	\mathbb{P}\left(\lim\limits_{t\rightarrow\infty}(t+1)^{\tau_0}\|x_{avg}(t)-u(t)\|=0\right)=1.
\end{align*}
\end{lemma}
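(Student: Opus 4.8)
The plan is to reduce everything to a deterministic-type scalar recursion for the gap $e(t):=x_{avg}(t)-u(t)$ and then invoke Lemma \ref{lem_scalar} along a.e. sample path. First I would left-multiply the compact update (\ref{eq_estimator03}) by $\tfrac1N(\textbf{1}_N\otimes I_M)^T$. Since $\textbf{1}_N^T\mathcal{L}=0$, the Laplacian term drops out; substituting $y_i(t)=H_i\theta+v_i(t)$, using $(\textbf{1}_N\otimes I_M)^T\bar D_H=[H_1^T,\dots,H_N^T]$ and $\sum_{i}H_i^TH_i=G$, the average obeys a recursion whose measurement-noise contribution is exactly $\tfrac{\alpha(t)}{N}\sum_i H_i^Tv_i(t)$ — which is precisely the noise term of the centralized recursion (\ref{eq_ut}) (recall $\tau_c=\tau_1$, $a_c=a$). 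Hence in $e(t)$ the noise cancels, and after writing $\theta-x_i(t)=(\theta-x_{avg}(t))+(x_{avg}(t)-x_i(t))$ one gets the purely pathwise recursion
\begin{align*}
e(t+1)=&\Big(I_M-\tfrac{\alpha(t)}{N}G\Big)e(t)+\tfrac{\alpha(t)}{N}\sum_{i=1}^{N}H_i^TH_i\big(x_{avg}(t)-x_i(t)\big)\\
&+\tfrac{\beta(t)}{N}\big((\textbf{1}_N^T\mathcal{A})\otimes I_M\big)\big(X(t_{k})-X(t)\big).
\end{align*}
The crucial structural point is that this carries \emph{no} stochastic forcing term, so no martingale convergence machinery is needed; every estimate below is applied pathwise.

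Next I would bound the two forcing terms. Lemma \ref{eq_error_trans} gives $\norm{X(t_{k})-X(t)}\le\bar m/(t+1)^{\rho_0}$, so the triggering term is $O\big((t+1)^{-(\tau_2+\rho_0)}\big)$. For the disagreement term, observe that the stated bound $\tau_0<\min\{\tau_1-\tau_2-\tfrac1{2+\epsilon_1},\,\rho_0+\tau_2-\tau_1\}$ together with $\tau_2<\tau_1$ (Assumption \ref{ass_steps}) forces $\tau_0<\min\{\rho_0,\tau_1-\tau_2-\tfrac1{2+\epsilon_1}\}$; so I fix $\tau_0'$ with $\tau_0<\tau_0'<\min\{\rho_0,\tau_1-\tau_2-\tfrac1{2+\epsilon_1}\}$. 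By Lemma \ref{lem_consensus}, a.s.\ there is a finite random variable $C$ with $\norm{x_{avg}(t)-x_i(t)}\le C(t+1)^{-\tau_0'}$ for all $t$ and all $i$, making that term $\le C'(t+1)^{-(\tau_1+\tau_0')}$ with $C'$ a.s.\ finite. Finally, since $\alpha(t)\to0$, for $t$ beyond some $T$ we have $\norm{I_M-\tfrac{\alpha(t)}{N}G}\le 1-\tfrac{a\lambda_{min}(G)}{N(t+1)^{\tau_1}}$ (using Assumption \ref{ass_obser}, $\lambda_{min}(G)>0$).

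Combining, for $t>T$ the scalar sequence $z(t):=\norm{e(t)}$ satisfies $z(t+1)\le(1-r_1(t))z(t)+r_2(t)$ with $r_1(t)=\frac{a\lambda_{min}(G)/N}{(t+1)^{\tau_1}}$ and $r_2(t)=\frac{a_2}{(t+1)^{\delta_2}}$, where $\delta_2:=\min\{\tau_1+\tau_0',\,\tau_2+\rho_0\}$ and $a_2$ is an a.s.\ finite constant. The hypothesis $\rho_0>\tau_1-\tau_2$ and $\tau_0'>0$ give $\delta_1:=\tau_1<\delta_2$, while $\delta_2-\delta_1=\min\{\tau_0',\,\rho_0+\tau_2-\tau_1\}>\tau_0$ by the choice of $\tau_0'$ and the hypothesis on $\tau_0$. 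Applying Lemma \ref{lem_scalar} pathwise with $\delta_0=\tau_0$ — the critical case $\tau_1=1$ being handled exactly as in the proof of Theorem \ref{thm_bias} — yields $(t+1)^{\tau_0}z(t)\to0$ a.s., which is the claim. The only genuine obstacle is the exponent accounting in this last step: one must check that the decay rate $\delta_2-\delta_1$ certified by Lemma \ref{lem_scalar} strictly dominates the target $\tau_0$, and it is exactly this that dictates both the admissible range of $\tau_0$ and the standing requirement $\rho_0>\tau_1-\tau_2$; the rest is assembly, since the hard analysis has already been carried by Lemmas \ref{eq_error_trans} and \ref{lem_consensus}.
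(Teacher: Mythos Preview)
The paper omits this proof (``Due to page limitation, the proof is omitted''), so there is no text to compare against directly. Your derivation is, however, exactly the argument the paper's framework points to: averaging (\ref{eq_estimator03}) kills the Laplacian term, the noise term matches the one in (\ref{eq_ut}) and cancels in $e(t)$, and the two residual forcings are handled respectively by Lemma \ref{eq_error_trans} and Lemma \ref{lem_consensus}, after which Lemma \ref{lem_scalar} closes the loop. Your exponent bookkeeping is correct, including the observation that the stated range of $\tau_0$ implies $\tau_0<\min\{\rho_0,\tau_1-\tau_2-\tfrac{1}{2+\epsilon_1}\}$ so that Lemma \ref{lem_consensus} can be invoked with a strictly larger $\tau_0'$.

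One small caveat: your appeal to ``the critical case $\tau_1=1$ being handled exactly as in the proof of Theorem \ref{thm_bias}'' does not quite go through. In Theorem \ref{thm_bias} the contraction constant $m_0$ can be inflated because it arises from $\beta(t)(\mathcal{L}\otimes I_M)+\alpha(t)D_H$ and $\beta(t)/\alpha(t)\to\infty$; here the contraction factor is $a\lambda_{min}(G)/N$, which is fixed. So when $\tau_1=1$ one genuinely needs $a\lambda_{min}(G)/N>\tau_0$. This is precisely the extra hypothesis $a>\tfrac{N\tau_0}{\lambda_{min}(G)}$ that the paper imposes in Theorem \ref{thm_consistency}, which is the only place Lemma \ref{lem_appro_centr} is used with $\tau_0>0$; the lemma statement itself simply omits it. Aside from this point --- which is a gap in the paper's statement rather than in your strategy --- your proposal is correct and is almost certainly the intended proof.
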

\begin{proof}
		Due to page limitation, the proof is omitted.
\end{proof}
The strong consistency of   estimator (\ref{eq_estimator}) is provided in the following theorem.
\begin{theorem}(Strong Consistency)
	Consider the algorithm (\ref{eq_estimator}). Let Assumptions \ref{ass_graph} - \ref{ass_steps} hold.
	If $\rho_0>\tau_1-\tau_2$, the estimate sequence $\{x_i(t)\}$ is of strong consistency w.r.t. $\theta$, i.e.,
	\begin{align}
	\mathbb{P}\left(\lim\limits_{t\rightarrow\infty}x_i(t)=\theta\right)=1, \forall i\in\mathcal{V}.
	\end{align}
\end{theorem}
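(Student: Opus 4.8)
The plan is to obtain the theorem by a triangle-inequality decomposition routed through the centralized estimator of Definition~\ref{df_centra}, exactly along the path outlined just before the statement. I would fix $u(t)$ to be the centralized linear estimate of Definition~\ref{df_centra} with $\tau_c=\tau_1$ and $a_c=a$, started from an arbitrary initial value. Part~1) of Lemma~\ref{prop_centr} gives $\Pro\big(\lim_{t\to\infty}u(t)=\theta\big)=1$, so it is enough to show $\Pro\big(\lim_{t\to\infty}\norm{x_i(t)-u(t)}=0\big)=1$ for every $i\in\mathcal{V}$, and for this I would use
\begin{align*}
\norm{x_i(t)-u(t)}\le \norm{x_i(t)-x_{avg}(t)}+\norm{x_{avg}(t)-u(t)}.
\end{align*}

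The first genuine step is to verify that the admissible index sets in Lemma~\ref{lem_consensus} and Lemma~\ref{lem_appro_centr} are nonempty, and in particular that $\tau_0=0$ is a legitimate choice in both. For Lemma~\ref{lem_consensus} this needs $\min\{\rho_0,\tau_1-\tau_2-\frac{1}{2+\epsilon_1}\}>0$: the first term is positive because $\rho_0=\min_{j}\rho_j$ with every $\rho_j>0$, and the second is positive by the last inequality of Assumption~\ref{ass_steps}. For Lemma~\ref{lem_appro_centr} it needs $\min\{\tau_1-\tau_2-\frac{1}{2+\epsilon_1},\,\rho_0+\tau_2-\tau_1\}>0$ together with the lemma's hypothesis $\rho_0>\tau_1-\tau_2$; the first term is again covered by Assumption~\ref{ass_steps} and the second is precisely the standing hypothesis $\rho_0>\tau_1-\tau_2$ of the theorem. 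For completeness one also notes $\rho_0>\tau_1-\tau_2>0.5-\tau_2$ since $\tau_1>0.5$, so Lemma~\ref{lem_bound} is applicable wherever it is invoked.

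With $\tau_0=0$ permitted, Lemma~\ref{lem_consensus} yields $\Pro\big(\lim_{t\to\infty}\norm{x_i(t)-x_{avg}(t)}=0\big)=1$ for each $i$, and Lemma~\ref{lem_appro_centr} yields $\Pro\big(\lim_{t\to\infty}\norm{x_{avg}(t)-u(t)}=0\big)=1$. Intersecting these two probability-one events with $\{\lim_t u(t)=\theta\}$ from Lemma~\ref{prop_centr} produces a probability-one event on which the right-hand side of the displayed inequality tends to $0$ and $u(t)\to\theta$, hence $x_i(t)\to\theta$; a finite intersection of probability-one events has probability one and $\mathcal{V}$ is finite, so this holds simultaneously for all $i\in\mathcal{V}$, which is the assertion.

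The real difficulty is not the bookkeeping above but the two auxiliary lemmas whose proofs are suppressed in the excerpt, so I expect essentially no obstacle at the level of this proof itself. Lemma~\ref{lem_consensus} requires contracting the disagreement dynamics driven by $\beta(t)(\mathcal{L}\otimes I_M)$ against the perturbations $\alpha(t)\bar D_HV(t)$ and $\beta(t)(\mathcal{A}\otimes I_M)(X(t_{k})-X(t))$ in (\ref{eq_estimator03}): Lemma~\ref{eq_error_trans} handles the triggering term, Lemma~\ref{lem_bound} keeps $\norm{X(t)}$ almost surely bounded, and Assumption~\ref{ass_noise} enters through a martingale-convergence estimate for the noise term (this is where the exponent $\frac{1}{2+\epsilon_1}$ appears). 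Lemma~\ref{lem_appro_centr} then compares the averaged recursion obtained from (\ref{eq_estimator03}) via $\textbf{1}_N^T\mathcal{L}=0$ with (\ref{eq_ut}); the mismatch is the triggering error, of order $(t+1)^{-\rho_0}$ by Lemma~\ref{eq_error_trans} and pre-multiplied by $\beta(t)$, which becomes negligible after rescaling exactly when $\rho_0+\tau_2-\tau_1>0$, plus the already-controlled consensus error, and a scalar comparison of the Lemma~\ref{lem_scalar} type closes it. Granting those two lemmas, the theorem is the short combination described here.
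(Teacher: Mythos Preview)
Your proposal is correct and matches the paper's own proof exactly: the paper also invokes Lemmas~\ref{prop_centr}--\ref{lem_appro_centr} with $\tau_0=0$ and concludes immediately. Your additional care in checking that $\tau_0=0$ is admissible in both lemmas, and that the hypothesis $\rho_0>\tau_1-\tau_2$ supplies the needed positivity, is more detailed than the paper's one-line proof but follows precisely the same route.
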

\begin{proof}
	According to Lemmas \ref{prop_centr} - \ref{lem_appro_centr}, taking $\tau_0=0$, the conclusion holds.
\end{proof}
In the next theorem, we provide the convergence speed that the estimates by (\ref{eq_estimator}) converge to the estimates of centralized estimator in (\ref{eq_ut}).
\begin{theorem}\label{thm_consistency}(Centralized Approximation) 
Let the algorithm (\ref{eq_estimator}) share the same parameter setting  as a centralized estimator $\{u(t)\}$ in that $ \tau_1=\tau_c$. Assume Assumptions \ref{ass_graph} - \ref{ass_steps} hold, and if $\tau_1=1$, further suppose 
\begin{align}
a>\frac{N\tau_0}{\lambda_{min}( G)}.
\end{align}
	Then, if $\rho_0>\tau_1-\tau_2$, for each  $\tau_0$ subject to
			\begin{align*}
	0\leq\tau_0<\min\{\tau_1-\tau_2-\frac{1}{2+\epsilon_1},\rho_0+\tau_2-\tau_1\}, 
	\end{align*}
we have
	\begin{align}\label{eq_cen_app}
	\mathbb{P}\left(\lim\limits_{t\rightarrow\infty}(t+1)^{\tau_0}\norm {x_i(t)-u(t)}=0\right)=1, \forall i\in\mathcal{V}.
	\end{align}
\end{theorem}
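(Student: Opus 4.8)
The plan is to route the comparison of $x_i(t)$ with $u(t)$ through the network average $x_{avg}(t)$ and then quote the two preceding lemmas. For every $t$ and every $i\in\mathcal{V}$, the triangle inequality gives
\begin{align*}
(t+1)^{\tau_0}\norm{x_i(t)-u(t)}\leq (t+1)^{\tau_0}\norm{x_i(t)-x_{avg}(t)}+(t+1)^{\tau_0}\norm{x_{avg}(t)-u(t)},
\end{align*}
so it suffices to prove that each summand on the right tends to zero almost surely.

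First I would verify that the window for $\tau_0$ declared in the theorem is contained in the window required by Lemma \ref{lem_consensus}. Because $\tau_2\leq\tau_1$ by Assumption \ref{ass_steps}, one has $\rho_0+\tau_2-\tau_1\leq\rho_0$, hence
\begin{align*}
\min\left\{\tau_1-\tau_2-\tfrac{1}{2+\epsilon_1},\,\rho_0+\tau_2-\tau_1\right\}\leq\min\left\{\rho_0,\,\tau_1-\tau_2-\tfrac{1}{2+\epsilon_1}\right\}.
\end{align*}
Thus every $\tau_0$ admissible here is also admissible for Lemma \ref{lem_consensus}, which yields $\mathbb{P}(\lim_{t\to\infty}(t+1)^{\tau_0}\norm{x_i(t)-x_{avg}(t)}=0)=1$ for all $i\in\mathcal{V}$. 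For the second summand, the range of $\tau_0$ in the theorem coincides with the one in Lemma \ref{lem_appro_centr}; since the estimator (\ref{eq_estimator}) is run with the same parameters $\tau_c=\tau_1$ and $a_c=a$ as the centralized scheme $\{u(t)\}$ and $\rho_0>\tau_1-\tau_2$, Lemma \ref{lem_appro_centr} applies and gives $\mathbb{P}(\lim_{t\to\infty}(t+1)^{\tau_0}\norm{x_{avg}(t)-u(t)}=0)=1$. This is the step that also consumes the extra hypothesis $a>N\tau_0/\lambda_{min}(G)$ imposed when $\tau_1=1$: the error $e(t)=x_{avg}(t)-u(t)$ obeys a recursion of the form $e(t+1)=(I_M-\tfrac{\alpha(t)}{N}G)e(t)+(\text{higher-order terms})$, and in the boundary case $\tau_1=1$ the scalar comparison of Lemma \ref{lem_scalar} forces the decay coefficient $a\lambda_{min}(G)/N$ to exceed $\tau_0$.

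To conclude, I would intersect the two probability-one events; the intersection again has probability one, and on it the displayed triangle inequality gives $\lim_{t\to\infty}(t+1)^{\tau_0}\norm{x_i(t)-u(t)}=0$ for each of the finitely many $i\in\mathcal{V}$, which is exactly (\ref{eq_cen_app}).

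I do not anticipate a real obstacle: the theorem is essentially a bookkeeping synthesis of Lemmas \ref{lem_consensus} and \ref{lem_appro_centr}. The only delicate points are confirming that the $\tau_0$-window of the theorem lies inside that of Lemma \ref{lem_consensus} (which reduces to $\tau_2\leq\tau_1$) and tracking where the condition $a>N\tau_0/\lambda_{min}(G)$ enters when $\tau_1=1$ — namely inside Lemma \ref{lem_appro_centr} via the $\delta_1=1$ branch of Lemma \ref{lem_scalar}, rather than in the final assembly.
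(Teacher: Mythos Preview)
Your proposal is correct and follows essentially the same route as the paper: the paper's proof is the one-line remark that the conclusion follows from Lemma~\ref{lem_consensus} and Lemma~\ref{lem_appro_centr}. Your version simply makes the triangle-inequality splitting explicit, checks that the $\tau_0$-window of the theorem is contained in that of Lemma~\ref{lem_consensus}, and identifies where the extra hypothesis $a>N\tau_0/\lambda_{min}(G)$ for $\tau_1=1$ is consumed---all of which the paper leaves implicit.
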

\begin{proof}
According to Lemma \ref{lem_consensus}, Lemma \ref{lem_appro_centr}, the conclusion holds.
\end{proof}
Communication frequency is essential to the research of event-triggered distributed estimation. In the following theorem, the triggering interval of the defined event in (\ref{eq_event}) is investigated in the sense of infinite time.
\begin{theorem}\label{thm_triggering}(Triggering Interval)
Let $t_{k}^i$ be $k$th triggering instant of agent $i$, Assumptions \ref{ass_graph} - \ref{ass_steps} hold and agents share the same threshold, i.e., $\rho_i=\rho_0$. If
\begin{align}\label{eq_triggering_upp}
\rho_0<\tau_1-\frac{1}{2+\epsilon_1},
\end{align}
then for each agent, 
 the time interval between two successive triggered instants goes to infinity, i.e., 
\begin{align}\label{eq_limt}
\Pro\left(\lim\limits_{k\rightarrow\infty}(t_{k+1}^i-t_{k}^i)=\infty\right)=1, \forall i\in\mathcal{V}.
\end{align} 
\end{theorem}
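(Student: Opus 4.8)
The plan is to prove that, almost surely, the one–step increment $\norm{x_i(t+1)-x_i(t)}$ decays polynomially at a rate strictly larger than $\rho_0$, and then to convert this into a diverging lower bound on the inter–event gaps $t_{k+1}^i-t_k^i$. For the first part I would work from the equivalent recursion (\ref{eq_estimator02}): writing $y_i(t)-H_ix_i(t)=H_i(\theta-x_i(t))+v_i(t)$, the increment $x_i(t+1)-x_i(t)$ is the sum of a consensus term $\beta(t)\sum_{j\in\mathcal N_i}(x_j(t)-x_i(t))$, a measurement term $\alpha(t)H_i^T(H_i(\theta-x_i(t))+v_i(t))$, and a transmission–mismatch term $\beta(t)\sum_{j\in\mathcal N_i}(x_j(t_k^j)-x_j(t))$. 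For the consensus term, Lemma~\ref{lem_consensus} gives $\norm{x_j(t)-x_i(t)}\le\norm{x_j(t)-x_{avg}(t)}+\norm{x_{avg}(t)-x_i(t)}=o(t^{-\tau_0})$ for any $\tau_0$ in the (nonempty) admissible window of that lemma, so this term is $o(t^{-(\tau_2+\tau_0)})$. For the measurement term, Lemma~\ref{lem_bound} makes $\norm{\theta-x_i(t)}$ almost surely bounded, while Assumption~\ref{ass_noise} together with Markov's inequality and Borel--Cantelli shows that, almost surely, $\norm{v_i(t)}\le(t+1)^{\eta}$ for all large $t$ and every fixed $\eta>\tfrac1{2+\epsilon_1}$; hence this term is $O(t^{-(\tau_1-\eta)})$. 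For the transmission–mismatch term, Lemma~\ref{eq_error_trans} gives $\sum_{j\in\mathcal N_i}\norm{x_j(t_k^j)-x_j(t)}\le|\mathcal N_i|\,\bar m\,(t+1)^{-\rho_0}$, so it is $O(t^{-(\tau_2+\rho_0)})$. Merging the $O$–constants and the finitely many small–$t$ terms into a single finite random variable $C(\omega)$, I would conclude that almost surely $\norm{x_i(t+1)-x_i(t)}\le C(\omega)(t+1)^{-q}$ for all $t$, with $q:=\min\{\tau_2+\tau_0,\ \tau_1-\eta,\ \tau_2+\rho_0\}$. The essential point is that the hypothesis $\rho_0<\tau_1-\tfrac1{2+\epsilon_1}$ is exactly what permits choosing $\tau_0$ admissible with $\tau_2+\tau_0>\rho_0$ and $\eta$ with $\tfrac1{2+\epsilon_1}<\eta<\tau_1-\rho_0$; since also $\tau_2+\rho_0>\rho_0$, this gives $\rho_0<q<1$.

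Next I would turn the increment bound into the interval estimate. Fix a sample path in the probability–one event above and suppose agent $i$ triggers infinitely often (the case of finitely many triggers being immediate). Then the triggering instants form a strictly increasing integer sequence, so $t_k^i\to\infty$ as $k\to\infty$. By the definitions (\ref{eq_event})--(\ref{eq_trigger_scheme}), at $t=t_{k+1}^i$ the threshold is exceeded relative to the reference $x_i(t_k^i)$, whence
\begin{align*}
\frac1{(t_{k+1}^i+1)^{\rho_0}}
&<\norm{x_i(t_{k+1}^i)-x_i(t_k^i)}\\
&\le\sum_{s=t_k^i}^{t_{k+1}^i-1}\norm{x_i(s+1)-x_i(s)}\\
&\le C(\omega)\sum_{s=t_k^i}^{t_{k+1}^i-1}(s+1)^{-q}.
\end{align*}
If $t_{k+1}^i\le 2t_k^i+1$, I would bound each summand by $(t_k^i+1)^{-q}$ and rearrange to obtain $t_{k+1}^i-t_k^i>(t_k^i+1)^{q}/(C(\omega)(t_{k+1}^i+1)^{\rho_0})\ge(t_k^i+1)^{q-\rho_0}/(2^{\rho_0}C(\omega))$, which diverges because $q>\rho_0$ and $t_k^i\to\infty$. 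If instead $t_{k+1}^i>2t_k^i+1$, then trivially $t_{k+1}^i-t_k^i>t_k^i+1\to\infty$. In both cases $t_{k+1}^i-t_k^i\to\infty$ along the path, which is exactly (\ref{eq_limt}).

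The main obstacle is the increment bound, and within it the measurement/noise term: a mere second–moment estimate of $v_i(t)$ does not yield an almost–sure, pathwise control of the increment, so one must exploit the slightly stronger integrability in Assumption~\ref{ass_noise} via Borel--Cantelli, and the resulting almost–sure envelope $\norm{v_i(t)}=O(t^{\eta})$ with $\eta$ arbitrarily close to $\tfrac1{2+\epsilon_1}$ is precisely what forces the hypothesis $\rho_0<\tau_1-\tfrac1{2+\epsilon_1}$. A secondary point needing care is to verify that the admissible window for $\tau_0$ in Lemma~\ref{lem_consensus} is non–degenerate and compatible with $\tau_2+\tau_0>\rho_0$ under this hypothesis (treating separately whether $\rho_0$ falls below or above $\tau_1-\tau_2-\tfrac1{2+\epsilon_1}$), and that the various $O$–bounds from the earlier lemmas can indeed be consolidated into one finite random constant along almost every sample path.
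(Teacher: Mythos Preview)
Your proposal is correct and follows essentially the same route as the paper: decompose the increment via (\ref{eq_estimator02}) into consensus, measurement/noise, and transmission--mismatch pieces, bound them respectively through Lemma~\ref{lem_consensus}, Lemma~\ref{lem_bound} plus Assumption~\ref{ass_noise}, and Lemma~\ref{eq_error_trans}, observe that the hypothesis $\rho_0<\tau_1-\tfrac1{2+\epsilon_1}$ lets the combined decay exponent exceed $\rho_0$, then sum between consecutive trigger times and compare against the threshold. Your treatment is in a couple of places more explicit than the paper's---you spell out the Borel--Cantelli step behind the pathwise envelope $\norm{v_i(t)}=O(t^{\eta})$ and you close with a clean dichotomy on whether $t_{k+1}^i\le 2t_k^i+1$ rather than the paper's contradiction assuming a bounded gap---but these are presentational refinements of the same argument.
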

	\begin{proof}
Note that $t_{k}^i$ is $k$th triggering instant of agent $i$, then we focus on analyzing the 
time interval length of $t_{k+1}^i-t_{k}^i$ in the following.

According to (\ref{eq_estimator02}), for $t\geq t_{k}^i$, we have 
\begin{align}\label{eq_estimator01}
&x_{i}(t+1)-x_{i}(t_{k}^i)\\
=&x_{i}(t)-x_{i}(t_{k}^i)+\beta(t)\sum_{j\in\mathcal{N}_{i}}(x_{j}(t)-x_{i}(t))\nonumber\\
&+\alpha(t)  H_{i}^T( y_{i}(t)-H_{i}x_{i}(t))\nonumber\\
&+\beta(t)\sum_{j\in\mathcal{N}_{i}}(x_{j}(t_{k}^j)-x_{j}(t))\nonumber.
\end{align}
Taking norm operator on both sides of (\ref{eq_estimator01}) yields
\begin{align}\label{eq_estimator012}
&\norm{x_{i}(t+1)-x_{i}(t_{k}^i)}\\
\leq&\norm{x_{i}(t)-x_{i}(t_{k}^i)}+\beta(t)\norm{\sum_{j\in\mathcal{N}_{i}}(x_{j}(t)-x_{i}(t))}\nonumber\\
&+\alpha(t) \norm{ H_{i}^T( y_{i}(t)-H_{i}x_{i}(t))}\nonumber\\
&+\beta(t)\norm{\sum_{j\in\mathcal{N}_{i}}(x_{j}(t_{k}^j)-x_{j}(t))}\nonumber.
\end{align}
According to Lemma \ref{lem_consensus}, for $	0\leq\tau_0<\min\{\rho_0,\tau_1-\tau_2-\frac{1}{2+\epsilon_1}\}$, we have
	\begin{align*}
\mathbb{P}\left(\lim\limits_{t\rightarrow\infty}(t+1)^{\tau_0}\norm{x_i(t)-x_{j}(t)}=0\right)=1, \forall i,j\in\mathcal{V},
\end{align*}
Then there exits a scalar $c_3>0$, such that
\begin{align}\label{eq_c3}
\norm{\sum_{j\in\mathcal{N}_{i}}(x_{j}(t)-x_{i}(t))}\leq c_3 (t+1)^{-\tau_0}. 
\end{align}
By Lemma \ref{lem_bound} and Assumption \ref{ass_noise},  there exists a scalar $c_4>0$ and a sufficiently small $\delta>0$, such that
\begin{align}\label{eq_c4}
\norm{ H_{i}^T( y_{i}(t)-H_{i}x_{i}(t))}\leq c_4(t+1)^{\frac{1}{2+\epsilon_1}+\delta}.
\end{align}
According to Lemma \ref{eq_error_trans}, there exists a scalar $c_5>0$ such that 
\begin{align}\label{eq_c5}
\norm{\sum_{j\in\mathcal{N}_{i}}(x_{j}(t_{k}^j)-x_{j}(t))}\leq c_5(t+1)^{-\rho_0}.
\end{align}
Taking (\ref{eq_c3}), (\ref{eq_c4}) and (\ref{eq_c5}) into (\ref{eq_estimator012}), we have
\begin{align}\label{eq_estimator002}
&\norm{x_{i}(t+1)-x_i(t_{k}^i)}\\
\leq&\norm{x_{i}(t)-x_i(t_{k}^i)}+\beta(t)c_3 (t+1)^{-\tau_0}\nonumber\\
&+\alpha(t) c_4(t+1)^{\frac{1}{2+\epsilon_1}+\delta}+\beta(t)c_5(t+1)^{-\rho_0}\nonumber\\
\leq &\norm{x_{i}(t)-x_i(t_{k}^i)}+c_3\frac{1}{(t+1)^{\tau_0+\tau_2}}\nonumber\\\
&+ c_4\frac{1}{(t+1)^{\tau_1-\frac{1}{2+\epsilon_1}-\delta}}+c_5\frac{1}{(t+1)^{\rho_0+\tau_2}}.
\end{align}
Considering  $	0\leq\tau_0<\min\{\rho_0,\tau_1-\tau_2-\frac{1}{2+\epsilon_1}\}$, by choosing a sufficiently small $\delta$, we have
$\tau_0+\tau_2\leq \min\{\tau_1-\frac{1}{2+\epsilon_1}-\delta,\rho_0+\tau_2\}$. 
Denote $t_{k+1}^i=t_{k}^i+L_{k}^i$.
Then, there exists a sufficiently large integer $T_6$ and a scalar $c_6$, such that for
$t_{k}^i>T_6$, 
\begin{align*}
&\norm{x_i(t_{k+1}^i)-x_i(t_{k}^i)}\\
=&\norm{x_i(t_{k}^i+L_{k}^i)-x_i(t_{k}^i)}\\
\leq&\norm{x_i(t_{k}^i+L_{k}^i-1)-x_i(t_{k}^i)}+c_6 \frac{1}{(t_{k}^i+L_{k}^i)^{\tau_0+\tau_2}}\\
&\qquad\vdots\\
\leq&c_6\sum_{s=t_{k}^i}^{t_{k}^i+L_{k}^i-1} \frac{1}{(s+1)^{\tau_0+\tau_2}}\\
\leq&\frac{c_6}{(t_{k}^i+1)^{\tau_0+\tau_2}}L_{k}^i.
\end{align*}
A necessary condition to guarantee that the event in (\ref{eq_event}) is triggered for agent $i$ is 
\begin{align}\label{eq_equli}
&\frac{c_6}{(t_{k}^i+1)^{\tau_0+\tau_2}}L_{k}^i>\frac{1}{ (t_{k+1}^i+1)^{\rho_0}}\nonumber\\
\Longleftrightarrow&\frac{c_6}{(t_{k}^i+1)^{\tau_0+\tau_2}}L_{k}^i>\frac{1}{ (t_{k}^i+1+L_{k}^i)^{\rho_0}}\nonumber\\
\Longleftrightarrow&\frac{(t_{k}^i+1+L_{k}^i)^{\rho_0}}{(t_{k}^i+1)^{\tau_0+\tau_2}}L_{k}^i>\frac{\rho_0}{c_6}.
\end{align}
Due to $0\leq\tau_0<\min\{\rho_0,\tau_1-\tau_2-\frac{1}{2+\epsilon_1}\}$, there exists a scalar $\bar\delta>0$, such that $\tau_0=\min\{\rho_0,\tau_1-\tau_2-\frac{1}{2+\epsilon_1}\}-\bar\delta$.
Then, 
\begin{align*}
\tau_0+\tau_2=\min\{\rho_0+\tau_2,\tau_1-\frac{1}{2+\epsilon_1}\}-\bar\delta
\end{align*}
Recall the condition (\ref{eq_triggering_upp}), and let $\bar\delta$ go to zero, then $\tau_0+\tau_2>\rho_0.$
%
To make sure the satisfaction of (\ref{eq_limt}), we need to show that $L_{k}^i$ goes to infinity when $t_{k}^i$ goes to infinity. By contradiction, we suppose that there is an integer $\bar L^i$, such that $L_{k}^i\leq \bar L^i$, $\forall k\in\mathbb{N}$. A necessary condition of (\ref{eq_equli}) is 
\begin{align}
\frac{(t_{k}^i+1+\bar L^i)^{\rho_0}}{(t_{k}^i+1)^{\tau_0+\tau_2}}>\frac{\rho_0}{c_6\bar L^i},
\end{align}
which however cannot be satisfied as $t_{k}^i$ is very large due to $\tau_0+\tau_2>\rho_0$. Therefore, $L_{k}^i$ goes to infinity as $t_{k}^i$ goes to infinity.
\end{proof}

\section{Numerical Simulation}
In this section, we provide a numerical simulation to testify the effectiveness of distributed estimator based on event-triggered communication scheme proposed in this paper.

Consider an undirected network with four agents. The adjacency matrix of the network is $\mathcal{A}=\left(\begin{smallmatrix}
0& 1&0&1\\
1& 0&1&0\\
0& 1&0&1\\
1& 0&1&0
\end{smallmatrix}\right)$.
 The true parameter vector is supposed to be $\theta=[-1,2]^T.$ The observation matrices and the initial parameter estimates of these agents have the following forms
\begin{align*}
&H_1=[1,0]^T, H_2=[0,1]^T\\
&H_3=[1,1]^T, H_4=[1,2]^T\\
&x_{1}(0)=[10,20]^T,x_{2}(0)=[10,-10]^T\\
&x_{3}(0)=[10,-20]^T,x_{2}(0)=[20,-10]^T.
\end{align*}
We consider the time sequence $t=0,1,\dots,10000$. Let $\alpha(t)=\frac{1}{(t+1)^{0.7}}$, $\beta(t)=\frac{1}{(t+1)^{0.5}}$ and $\rho_{i}(t)=\frac{1}{(t+1)^{0.6}}$, for $i=1,2,3,4.$ The noises of each agent are supposed to be $i.i.d$ and Gaussian. The noises of agents are spatially independent. The distribution of measurement noises is mean zero and variance $0.01$. 

Under the above setting, by employing the distributed estimator (\ref{eq_estimator}) with triggering scheme (\ref{eq_trigger_scheme}) and the centralized estimator \ref{eq_ut}, we obtain simulation results in Fig. \ref{fig:estimates}, Fig. \ref{fig:error} and Fig. \ref{fig:rate}.
We see from Fig. \ref{fig:estimates} that the average estimates are asymptotically convergent to the true parameters of the system. By Fig. \ref{fig:error}, the consistency of the estimator for each agent is shown. Besides, we see that the centralized estimator has faster convergence speed, since it utilizes all measurements. The triggering time instants satisfying the  triggering scheme (\ref{eq_trigger_scheme}) during the whole estimation process is plotted in Fig. \ref{fig:rate} with communication rate\footnote{Communication rate is the ratio  of whole triggering time instants over the whole time-driven communication time instants} $1.175\%$. Thus, the communication frequency of the agents has been tremendously reduced with guaranteed convergence properties. 

\begin{figure}[htp]
	\centering
	\includegraphics[scale=0.6]{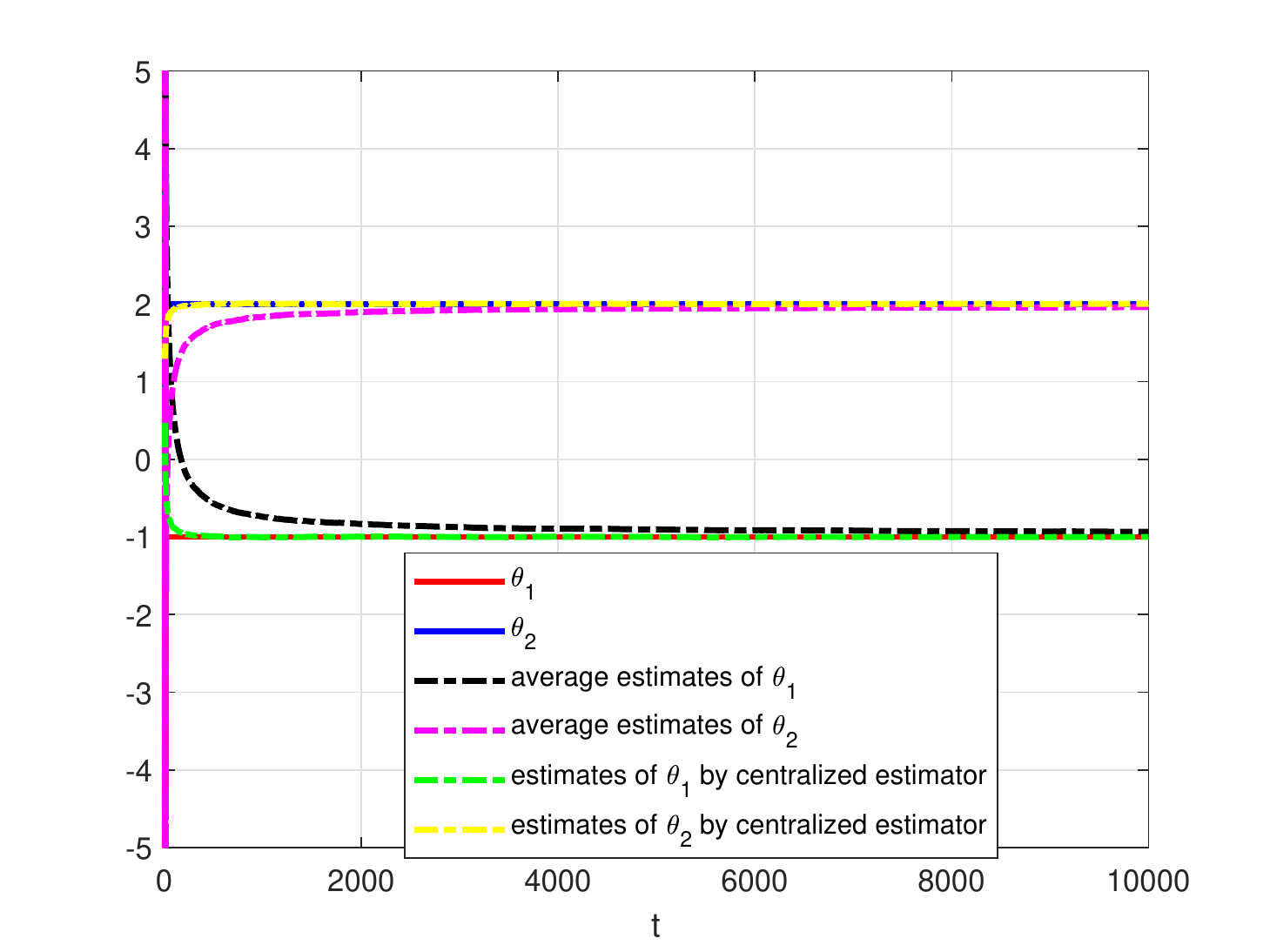}
	\caption {Parameter estimates of agents (average) and data center (centralized)}
	\label{fig:estimates}
\end{figure}

\begin{figure}[htp]
	\centering
	\includegraphics[scale=0.6]{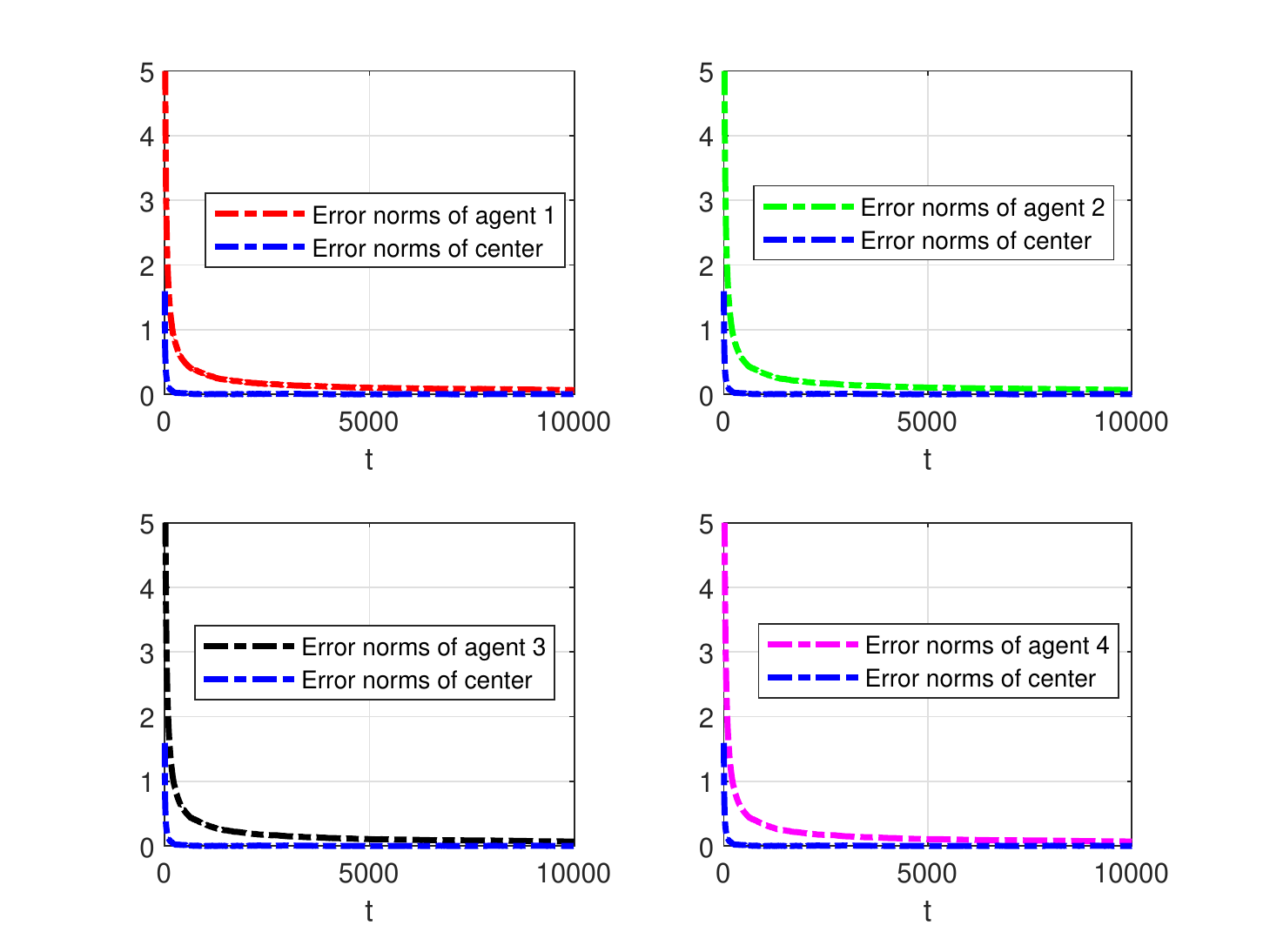}
	\caption {Estimation error norms of agents and data center}
	\label{fig:error}
\end{figure}

\begin{figure}[htp]
	\centering
	\includegraphics[scale=0.6]{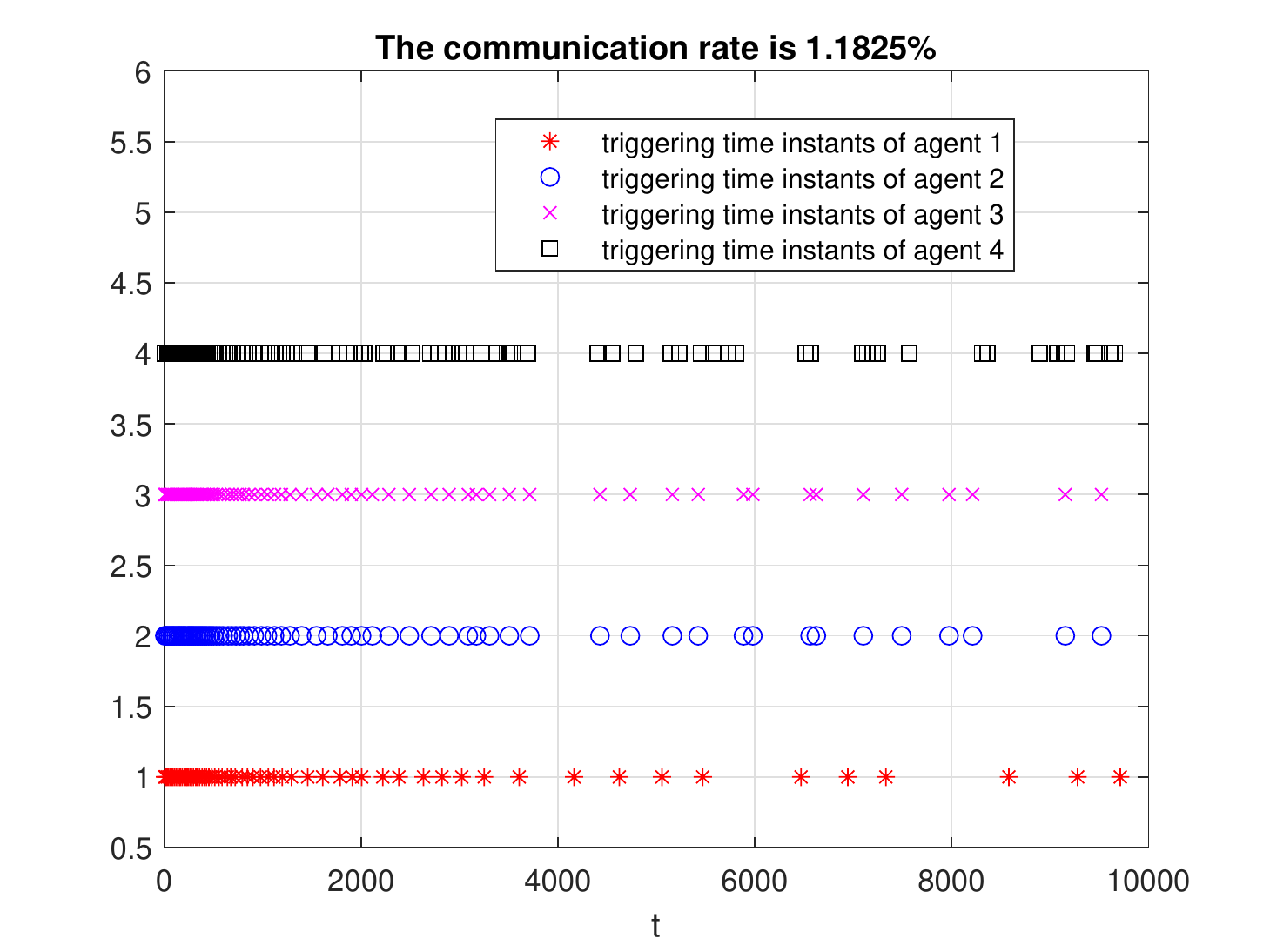}
	\caption {Triggering time instants and communication rate}
	\label{fig:rate}
\end{figure}

\section{Conclusion}
In this paper, a distributed parameter estimation problem with intermittent communications was studied.
First, we proposed an event-triggered communication scheme for each agent, by comparing a decaying threshold with the difference between the current estimate and the latest one sent out to   neighboring agents. Then, we analyze some main estimation properties including asymptotic  unbiasedness and strong consistency. We also showed that, with probability one, for every agent the time interval between two successive triggered instants goes to infinity as time goes to infinity.

%

\ifCLASSOPTIONcaptionsoff
  \newpage
\fi

\bibliography{C:/work_at_kth/All_references}
\bibliographystyle{ieeetr}

%
%

%
%
%

\end{document}